\newcommand*{\op}{{\mathcal L}}
\newcommand*{\mtr}{{\mathrm L}}
\newcommand*{\opmA}{\op_{A}}
\newcommand*{\opmB}{\op_{B}}
\newcommand*{\opmC}{\op_{C}}
\newcommand*{\matrA}{\mtr_{A}}
\newcommand*{\matrB}{\mtr_{B}}
\newcommand*{\ppp}{{\mathfrak{p}}}
\newcommand*{\qqq}{{\mathfrak{q}}}
\newcommand*{\rrr}{{\mathfrak{r}}}
\newcommand*{\sss}{{\mathfrak{s}}}
\newcommand*{\pp}{{{p}}}
\newcommand*{\qq}{{{q}}}
\newcommand*{\rr}{{{r}}}
\newcommand*{\sS}{{{s}}}
\newcommand*{\famA}{{\mathrm A}}
\newcommand*{\famB}{{\mathrm B}}
\newcommand*{\famC}{{\mathrm C}}
\newcommand*{\mllll}{{\ell}}
\newcommand*{\delTa}{{\mathcal D}}
\newcommand*{\Id}{{\mathbb I}}
\newcommand*{\argz}{}
\newcommand{\half}{\mbox{\scriptsize$1\over2$}}
\newcommand*{\Eqs}{{Eq.s}}
\newcommand{\Imi}{\mathrm{i}}
\newcommand{\eIphi}{{\Phi}}
\newcommand{\eP}{{\Psi}}
\newcommand{\somE}{    E    }
\newcommand{\EEpm}[1]{E_{\{\! #1 \!\}}}
\newcommand{\covered}[1]{\mbox{\textswab{#1}}}
\newcommand{\coveredz}{\raisebox{+0.012em}{\covered{z}}}
\newcommand{\coveredi}{
\raisebox{+0.012em}[1.2ex][0px]{\covered{i}}
}
\newcommand{\coveredmi}{
\raisebox{+0.012em}[1.2ex][0px]{-\covered{i}}
}
\newcommand{\coveredone}{
\raisebox{-0.050em}{\covered{1}}
}
\newcommand{\coveredmcone}{\raisebox{0px}[1em][0px]{\coveredco{-$\coveredone$}}}
\newcommand{\coveredmpone}{\raisebox{0px}[1em][0px]{\!\!\!\coveredpro{-$\coveredone$}}}
\newcommand{\coveredmiv}{\raisebox{0px}[1em][0px]{\coveredco{\,-$1/$}}}
\newcommand{\coveredrevco}{\raisebox{0px}[1em][0px]{\coveredco{-}}}
\newcommand{\coveredrevpro}{\raisebox{0px}[1em][0px]{\!\!\!\coveredpro{-}}}
\newcommand{\diag}{\,{\mathrm{diag}}}
\newcommand{\Mono}{\mathcal{M}}
\newcommand{\puncturedS}{^\backprime{\!}S^1}
\newcommand{\coveredpro}[1]{
\mbox{\hspace{-0.2em}%
\raisebox{0.5em}{
${\mbox{\scalebox{0.8}[0.6]{\rotatebox{180}{$\curvearrowright$}}}
\atop
 \raisebox{-0.13em}{#1}
}$%
}\hspace{-0.4em}
}}
\newcommand{\coveredco}[1]{\mbox{\hspace{-0.4em}%
\raisebox{0.5em}{%
${
\raisebox{-0.25em}[1ex][0.1em]{
\mbox{\scalebox{0.8}[0.5]{$\curvearrowleft$}}
}
\atop
 \raisebox{-0.13em}{#1}
}
  $
}\hspace{-1.02em}
}}
\newcommand{\coveredcominus}{\raisebox{0px}[0.81em][0.px]{\coveredco{$-$}}}
\newcommand{\coveredprominus}{\raisebox{0px}[0.81em][0.px]{\coveredpro{$-$}}}
\newtheorem{theorem}{Theorem}
\newtheorem{corollary}{Corollary}
\newtheorem{lemma}{Lemma}
\newtheorem{proposition}[theorem]{Proposition}
\theoremstyle{definition}
\newtheorem{convention}{Convention}
\newcommand{\involutive}{involutive}  
\newcommand{\monodromy}{monodromy}  
\newcommand{\subdomain}{subdomain}  
\begin{document}

\title[Symmetries of sDCHE]%
{
Symmetries of the space of solutions to
special double confluent Heun equation 
of negative integer order and its applications
}

\author{Sergey I. Tertychniy
}%

\address%
 {
 Russian Metrological Institute of Technical Physics and Radio Engineering
 (VNIIFTRI), Mendeleevo, 141570, Russia}%
 \thanks{
 Supported in part by RFBR grant N 17-01-00192.
 }
\begin{abstract}\noindent

Three linear operators ($\op$-operators) determining automorphisms 
of the space of solutions to a special double
confluent Heun equation (sDCHE) {\it  of negative integer order \/}
are considered. 
Their composition rules involving in a natural way the \monodromy{} transformation are given.
Introducing
eigenfunctions $\EEpm{+}, \EEpm{-}$ of one of $\op$-operators ($\opmC$)
which also satisfy sDCHE,
the four polylocal quadratic
functionals playing role of the first integrals of sDCHE
are derived. 
Basing on them,
the explicit matrix representations
of $\op$-operators
and the \monodromy{} operator with respect to the basis constituted by $\EEpm{\pm}$
are constructed.
The composition rules of $\op$-operators 
lead to functional equations for the eigenfunctions $\EEpm{\pm}$
which
can be interpreted as analytic continuations
of solutions to sDCHE from the half-plane $\Re z>0$ to their whole domain
by means of 
algebraic transformations.
Application of the  above results to the theory of the first order 
non-linear differential equation utilized, in particular, 
for the modeling of overdamped Josephson junctions in superconductors and
closely related to sDCHE
is presented.
The automorphisms of the set of its solutions induced
by $\op$-operators and by the \monodromy{} operator
(certain shift of the solution domain
in the latter case) represented by algebraic operations
are found.
Among them, one transformation is \involutive{} 
and another can be regarded as the square root of
the transformation induced by the \monodromy{} transformation.
\end{abstract}

\maketitle

Let us consider the following linear homogeneous second order differential equation
\begin{equation}
\label{eq::0010}
z^2 \somE''
+\big((l+1)
 z
+ \mu (1-z^2) \big) \somE'
+
(  -\mu  (l+1) z
+
\lambda
%
)
\somE
=0,
\end{equation}
where $l,\mu\not=0,$ and $\lambda$ are some constant parameters,
the argument 
of the holomorphic 
function $\somE=\somE(\coveredz)$ 
is 
a point $\coveredz$
of the universal cover $\covered{C}^*$
of $\mathbb{C}^*=\mathbb{C}\fgebackslash 0$,
the complex plane with zero removed,
$z$ denoting result of the canonical projection 
$\iota: \covered{C}\to \mathbb{C}^*$
of
$\coveredz$, 
$z=\iota\,\coveredz$. 
The Riemann surface
$\covered{C}^*$ 
plays the role of 
the natural domain of holomorphic
solutions to Eq.~\eqref{eq::0010}  singular at zero.
Its use 
is necessary for the capturing of some their global properties but
most of local considerations, including establishing of fulfillment of
Eq.~\eqref{eq::0010}, allows to replace the points of $\covered{C}^*$ by their projections
(complex numbers)
living in
$ \mathbb{C}^* $.

Eq.~\eqref{eq::0010} belongs to the family of
so called
double confluent Heun equations \cite{SW,SL} which name is often abbreviated to DCHE.
In Ref.~\cite{SW} a generic DCHE is given in the following form
\begin{equation}
                   \label{eq::0020}
D^2 v +\alpha(z+z^{-1}) D v + \big((\beta_1+1/2)\alpha z+\alpha^2/2-\gamma+(\beta_{-1}-1/2)\alpha z^{-1}\big)v=0,
\end{equation}
where $v=v(z)$ is the unknown, $D=z\,d/d\,z$, and  $\alpha,\beta_1, \beta_{-1}, \gamma$ are
arbitrary complex constants.
A straightforward computation shows that the ansatz 
\begin{eqnarray}
&&v=z^{l/2} e^{-\mu z}E,\nonumber
\\
&&\alpha=\mu,\,\beta_1=-(l+1)/2,\,\beta_{-1}=(1-l)/2,\,
\gamma=l^2/4-\lambda-\mu^2/2\nonumber
\end{eqnarray}
leads just to Eq.~\eqref{eq::0010}
(up to local interpretation of the argument of unknown). 
Eq.~\eqref{eq::0010}  is therefore equivalent to Eq.~\eqref{eq::0020},
provided the constraint
\begin{equation}
\label{eq::0050}
\beta_1+1/2= \beta_{-1}-1/2
\end{equation}
is imposed. Since Eq.~\eqref{eq::0010}  is characterized by the three free constant parameters while
there are four such parameters in Eq.\ \eqref{eq::0020}, Eq.~\eqref{eq::0010}  is sometimes named {\it special\/}
double confluent Heun equation.
The reason why equations of the form \eqref{eq::0010}    
were segregated within the general DCHE family will be discussed below.

In the present notes, we consider a particular case of Eq.~\eqref{eq::0010}  which arises when the parameter
$l$ (sometimes named its order) is integer. Moreover, since the case of {\it positive\/} integer
$l$ has been studied elsewhere (see \cite{T1}), we  claim also of $l$ to be negative, i.e.\ assume that
\begin{equation}
\label{eq::0060}
  l=-\mllll \;\mbox{ for some  } \mllll\in\mathbb{N}.
\end{equation}
A particular 
interest to Eq.~\eqref{eq::0010} of integer order is motivated by discovering
of additional symmetries of the space $\mathbf\Omega$ of its solutions. It is manifested
by existence of the three linear operators  
admitting 
simple explicit representations
which take $\mathbf\Omega$ onto itself \cite{BT1}.
Moreover, apart of rather special values of parameters, these operators
are invertible and thus act on $\mathbf\Omega$ as automorphisms.
In conjunction with the \monodromy{} map,
these give rise to
the group of symmetries of $\mathbf\Omega$.

Specifically, let us denote by the symbols $\opmA, \opmB, \opmC  $
the transformations of functions $\somE$ holomorphic in $\covered{C}^*$ which are described by the following
formulas:
\begin{eqnarray}
   \label{eq::0070}
\opmA:\somE(\coveredz)\mapsto
\opmA[{\somE}](\coveredz)
&=&
z^{2(\mllll-1)}e^{\mu(z+{1/z})}
\hspace{-0.1em}
\raisebox{-3pt}[1pt][5pt]{$\biggl\lfloor$}%
_{
\mbox{
\rlap{\hspace{-0.3ex}$
\phantom{.g|
}
\atop
\coveredz
\leftleftharpoons
-{{1\!/\coveredz}}^{\vphantom{\mathstrut}},\,
z
\leftleftharpoons
-{1/z}
$
}
}                                     
}\hspace{-2.53ex}\
\big[
-\ppp\big({\somE}'-\mu {\somE}\big)
+(-1)^\mllll \rrr {\somE}
\big],
\\
%
    \label{eq::0080}
\opmB:\somE(\coveredz)\mapsto
\opmB[{\somE}](\coveredz)
&=&
2\omega\,
z^{1-\mllll}e^{\mu(z+{1/z})}
 \times
\nonumber\\
&&\hspace{-0.20em}
\times
\raisebox{-3pt}[1pt][5pt]{$\biggl\lfloor$}%
_{
\mbox{
\rlap{\hspace{-0.6ex}$
\phantom{.b|
}
\atop
\coveredz
\leftleftharpoons
-{{\coveredz}^{\vphantom{\mathstrut}}}^{\vphantom{\mathstrut}}
,\,
z
\leftleftharpoons
-z
$
}
}
}\hspace{-2.53ex}\
\big[
(\qqq+\mu z^2\ppp) \big({\somE}'-\mu {\somE}\big)
-(-1)^\mllll(\sss+\mu z^2\rrr){\somE}
\big]
,
\\
%
\label{eq::0090}
\opmC:\somE(\coveredz)\mapsto
\opmC[\somE](\coveredz)
&=&
2\omega\,
z^{\mllll-1}
\raisebox{-3pt}[1pt][5pt]{$\Bigl\lfloor$}%
_{
\mbox{
\rlap{\hspace{-0.1ex}$
\hphantom{.
}
\atop
\coveredz
\leftleftharpoons
{{1\!/\coveredz}}
$
}
}
}
\hspace{-2.1ex}
\big( \somE'  - \mu \somE \big).
\end{eqnarray}
In them,
$\omega$ is an arbitrary non-zero constant while the symbols
$\rrr,\qqq,\rrr,\sss $ denote certain functions of $z$ 
(depending also on the parameters $\mllll,\lambda,\mu$)
which are defined
as follows.
\begin{enumerate}
\item
  Let us introduce the two sequences of the pairs of functions
$\{\pp_k, \qq_k\},\{\rr_k,\sS_k\}$, $k=0,1,2,\dots$,
of the variable  $z\in\mathbb{C}^*$
by means of
the following recurrence schemes:
\begin{eqnarray}
\label{eq::0100}
&&\begin{aligned}\hspace{0.9em}
\pp_{0}=0,\; \qq_{0}=1,\;\rr_{0}=z^{-2},\; \sS_{0}=-\mu;
\end{aligned}
\\
\label{eq::0110}
&&
\left\{
\begin{aligned}
\pp_{k}=&\:
(1 - \mllll)z\,\pp_{k-1} + \qq_{k-1} + z^2 \pp_{k-1}',
\hspace{5.9em} \mbox{for }k\in\mathbb{N}
 \\
\qq_{k}=&\:
z^2 (-\lambda + (\mllll+1) \mu z) \pp_{k-1} + \mu\left(1- z^2\right) \qq_{k-1}
+ z^2 \qq_{k-1}'
\\[0.2em]
\end{aligned}
\right.
\\
\label{eq::0120}
&&
\left\{
\begin{aligned}
\rr_{k}=&\:
2 (k-2)z\,\rr_{k-1} - \sS_{k-1} - z^2 \rr_{k-1}',
\\
\sS_{k}=&\:
z^2 \left(\lambda-\left(\mllll + 1\right) \mu z\right)\rr_{k-1}+
\\&
+ \left(\left(2(k-1)-(\mllll+1)\right) z + \mu\left(z^2-1\right) \right)\sS_{k-1}
- z^2 \sS_{k-1}';
\end{aligned}
\llap{for $k\in\mathbb{N}$}
\right.
\end{eqnarray}
\item 
let us pick out their ``diagonal'' elements 
for which 
the pair's
indices  are equal to $\mllll $
and assign
\begin{equation}
\label{eq::0130}
 \ppp= \pp_{\mllll},\;\qqq= \qq_{\mllll},\;\rrr= \rr_{\mllll},\;\sss=\sS_{\mllll}.
\end{equation}
\end{enumerate}
Albeit $\rr_{0}$ is singular at $z=0$,
for any greater index $k$ all the functions
$\pp_k, \qq_k,\rr_k,\sS_k$  are polynomial in $z$.
In particular,
the functions \eqref{eq::0130} are also polynomials and it can be shown
that their degrees are equal to
$2(\mllll-1), 2\mllll,2(\mllll-1), 2\mllll$, respectively \cite{BT1};
accordingly, the functions
$\ppp=\ppp(z),
\dots
\sss=\sss(z) $ 
are 
well defined and are holomorphic everywhere in  $\mathbb{C}$.
(It is worth noting that they are 
polynomial in $\lambda$ and $\mu$ as well although this dependence
is not manifested in our formulas).
Thus, in the formulas \eqref{eq::0070}, \eqref{eq::0080}, 
the symbols $\ppp,\!\qqq,\!\rrr,\!\sss$
stand for 
the known
 entire functions of $z=\iota\,\coveredz$
which can be computed
in explicit form
for every given $\mllll$.
As above, the symbols
  $\somE,\somE'$ also denote some holomorphic functions  but
their argument is 
$\coveredz\in\covered{C}^*$.
Finally, the  ``aggregates'' of the form
$\lfloor_%
{\,\coveredz
\leftleftharpoons
{{\covered{u}}}
\!,\,
z
\leftleftharpoons
\mathrm U}
\{\cdots\} 
$
denote  the results 
of substituting
of the expressions
$ {\covered{U}} $ and $\mathrm U $ in place of
the variables
$ \coveredz $ and $ z $, respectively, into the   expression
$ \{\cdots\}  $ considered as a function of 
$ \coveredz $ and $ z $.

Having explained 
the
notations, it is important to note that
\Eqs~\eqref{eq::0070}-\eqref{eq::0090} can not still serve
themselves exhaustive
definitions of certain transformation rules as they should.
Their insufficiency
lurks in 
the lack of uniqueness inherent to the lifts
of the maps
$\famA\,|\,\famB\,|\,\famC:z \to -{1/z}\,| -\!z\,|\, {1/z}$
from the punctured complex plane $\mathbb{C}^*$ to its
universal cover $\covered{C}^*$. In other words, one must
additionally
 clarify
what is the actual meaning of the records $-{1\!/\coveredz}, -\coveredz, {1\!/\coveredz}$
encompassed in 
\eqref{eq::0070}-\eqref{eq::0090},
the point which is simple but yet plays sometimes a non-trivial role. 

To be more specific, let us model 
the Riemann surface  $\covered{C}^*$
by the set $\mathbb{R}_+\!\times\mathbb{R}$
of all the pairs $(\rho,\phi)$ of real numbers (coordinates)
of which
the first one 
is strictly positive, $\rho>0$.
Then the projection $\iota$
of  $\covered{C}^*$
to $\mathbb{C}^*$
is defined by the map $ \covered{C}^*\ni\coveredz:=(\rho,\phi)\stackrel{\iota}{\to}\rho\, e^{\Imi\phi}=z=:
\iota\,\coveredz
\in \mathbb{C}^*$. We shall refer to the above construction as {\it the similog model\/} 
(of  $\covered{C}^*$).
It allows to identify 
simply connected subsets of $\mathbb{C}^*$ with the corresponding subsets
of $\covered{C}^*$. 
Then 
 functions holomorphic in $\covered{C}^*$
are defined
as those which
locally coincide with functions holomorphic in $\mathbb{C}^*$.
The algebraic operations
in $ \covered{C}^* $ must be
conveyed through the projection
$\iota$ to the corresponding operations in $\mathbb{C}^*$.

Let us consider, say, how the map $\famB: z\to-z$ can be lifted to $\covered{C}^*$.
In frames of the above model, 
the 
obvious
concordant transformations 
are
$\coveredcominus 
:(\rho, \phi)\to(\rho,\pi+\phi)$
(in projection to $\mathbb{C}^*$, 
the counterclockwise rotation through $\pi$ which is homotopic to the identical transformation)
 and
$ \coveredprominus 
:(\rho, \phi)\to(\rho,-\pi+\phi)$
(similar but this time clockwise rotation through the same angle, respectively).
These lifts of $\famB$ are distinct and
neither of them is preferable. 
The projections of the results of their actions to a point, at first, coincide
and, at second, differ from the projection of the argument by the opposite sign alone.
For example, ``reversing sign''
of the point $\coveredone:=(1,0)$ projected to 1, 
we 
obtain the
 two distinct 
{\it points} of $\covered{C}^*$,
$ \coveredmcone 
:=(1,\pi)$ 
and 
$ \coveredmpone 
:=(1,-\pi)$, which
both project to (in $\covered{C}^*$, ``play the role of'') {\it the number} $-1$.

For the inversion map $\famC:z\to 1/z$,
the eye-catching 
transformation which could 
serve its lift
to $\covered{C}^*$,
is obviously
$ 1/ : (\rho, \phi)\to(1/\rho,-\phi)$.
Although it 
may seem that nothing more is needed, 
it would actually be insufficient to limit our choice
by such an instance alone. Indeed, on $\mathbb{C}^*$,  the
map $C$ 
possesses the two fixed points, $z=+1$
and $z=-1$. At the same time,
the above lift
 `$ 1/ $' 
has the only fixed point
$\coveredone:=(1,0)$, which projects to $+1$, obviously.
Hence for the sake of completeness we have to introduce yet 
 another lift of the
inversion map which would possess a fixed point projected to $-1$. 
But we have noted above 
two points
with such projection, being not able to choose a preferable one among them.
The corresponding lifts of the inversion map are easily constructed:
the map
$ \raisebox{0px}[1em][0px]{\coveredco{1/}} : (\rho, \phi)\to(1/\rho,2\pi-\phi)$
possesses the only fixed point
$ \coveredmcone$,
the only fixed point of the map
$ \raisebox{0px}[1em][0px]{\coveredpro{1/}} : (\rho, \phi)\to(1/\rho,-2\pi-\phi)$
is $ \coveredmpone$.
Thus,
in total, the three candidates to the role of lift of the inversion map $C $ to $ \covered{C}^* $
have arisen. 

In case of the third \involutive{} map $\famA: z\to-1/z$ the use of the fixed points is also
a convenient
 trick helping to figure out the ``minimal collection'' of its lifts.
Now, in  $\mathbb{C}^*$, the fixed points are the pure imaginary $\Imi$ and $-\Imi$. The points projecting to them
which are most close both between themselves and also to the lift of the unit $\coveredone$
(an additional criterium for our selection)
are $\coveredi:=(1,\pi/2)$ and $\coveredmi:=(1,-\pi/2)$. They are the fixed points of the maps
  $\;\raisebox{0px}[1em][0px]{\coveredco{-$1/$}} : (\rho, \phi)\to(1/\rho,  \pi-\phi)$
and
$ \raisebox{0px}[1em][0px]{\coveredpro{-$1/$}} : (\rho, \phi)\to(1/\rho,-\pi-\phi)$, respectively,
which both
represent the
lifts of $A$.

It follows from the above examples that although the
replacements 
$z\leftleftharpoons -1/z \,|\, -\!z \,|\, 1/z$
utilized in formulas \eqref{eq::0070}-\eqref{eq::0090}, %
where they
act to arguments of holomorphic functions, can be extended to the case
of the domain $ \covered{C}^* $, the
triplet of the corresponding lifts is in no way unique.
To avoid such an uncertainty, 
we pick 
the only
 instance among the $2\times3\times2$ of
opportunities 
ensured by the above speculation (their amount is actually unlimited)
and assume to adhere to the following
\begin{convention}
Let us assume
the replacements of the argument
$\coveredz\in \covered{C}^* $ of the function $\somE$ and its
derivative $\somE'$ involved in the formulas \eqref{eq::0240}-\eqref{eq::0260}
to be realized
in accordance with 
the following rules
\begin{eqnarray}
\label{eq::0140}
\mathrm{A}:&& \coveredz
:= 
(\rho,\phi)
\leftleftharpoons 
(1/\rho,\pi-\phi)
=: 
\coveredco{\,-$1/$}\,\coveredz
,
\\[-0.3em]
\label{eq::0150}
\mathrm{B}:&&\coveredz
:= 
(\rho,\phi)
\leftleftharpoons 
(\rho,\pi+\phi)
\hspace{1.08em}
=:
\coveredco{-}\!\coveredz
,
\\
\label{eq::0160}
\mathrm{C}:&&\coveredz
:= 
 (\rho,\phi)
\leftleftharpoons 
(1/\rho,-\phi)
\hspace{1.08em}
=: 
1/\coveredz
,
\end{eqnarray}
respectively.
\end{convention}

It is worth noting that the different realizations of lifts
of the same transformations 
arise
from ones indicated in \Eqs{~}\eqref{eq::0140}-\eqref{eq::0160} 
by means of subtraction/addition of 
$2\pi$
from/to the second ($\phi$-) component of the ``conventional'' result.
In other words,
they can be obtained applying additionally a number of 
the replacements
\begin{eqnarray}
\label{eq::0170}
\mathrm{M}^{\pm1}:&& \coveredz
:=
(\rho,\phi)
\leftleftharpoons 
(\rho,\pm2\pi+\phi)
=: 
\mathrm{M}^{\pm1}\coveredz
\end{eqnarray}
which represent the {\it \monodromy{}} transformation and its inverse.
Application of arbitrarily folded \monodromy{}
transformation $\mathrm{M}^{k}, k\in\mathbb{Z},$ sends any lift of a map on $\mathbb{C}^*$
to another its lift.
There exist therefore not two or three lifts,
as it was considered in the above examples,
but rather two-side sequences of them.

Basing on definitions \eqref{eq::0140}-\eqref{eq::0160}, 
we are now able to formulate the main property of the $\op$-operators 
\cite{BT1}
which will be utilized here.
                       \begin{theorem}\label{t:010}
                    Transformations \eqref{eq::0070}-\eqref{eq::0090}
                     send any solution to Eq.~\eqref{eq::0010}
                     to solution of the same equation.
                        \end{theorem}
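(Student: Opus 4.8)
The plan is to verify the claim by a direct substitution computation for each of the three operators, organized around the fact that Eq.~\eqref{eq::0010} is a rank-two linear ODE, so its solution space is characterized locally and it suffices to check that the transformed function again satisfies \eqref{eq::0010}. I would first record the auxiliary first-order object $F:=\somE'-\mu\somE$ that appears in all three formulas \eqref{eq::0070}--\eqref{eq::0090}; rewriting \eqref{eq::0010} in terms of $\somE$ and $F$ converts the second-order equation into a first-order linear system for the pair $(\somE,F)$ with polynomial-in-$z$ coefficients. The $\op$-operators then act on this pair by an explicit $z$-dependent $2\times2$ matrix (built from $\ppp,\qqq,\rrr,\sss$, powers of $z$, and the prefactor $e^{\mu(z+1/z)}$) composed with one of the coordinate changes $z\leftleftharpoons -1/z$, $-z$, $1/z$ specified by the Convention in \eqref{eq::0140}--\eqref{eq::0160}. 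The target of the proof is to show that conjugating the system matrix by this combination returns (a scalar multiple of) the same system matrix, which is exactly the statement that solutions map to solutions.

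The key steps, in order, are: \emph{(i)} For the simplest operator $\opmC$, substitute $z\mapsto 1/z$ into \eqref{eq::0010}, differentiate $\somE(1/z)$ by the chain rule, and check that $2\omega\,z^{\mllll-1}(\somE'-\mu\somE)\big|_{z\to 1/z}$ satisfies \eqref{eq::0010}; the $z^{\mllll-1}$ weight and the $\mu$-shift are precisely what absorb the cross terms. \emph{(ii)} For $\opmB$, which involves $z\mapsto -z$ together with the gauge factor $z^{1-\mllll}e^{\mu(z+1/z)}$ and the polynomial combinations $\qqq+\mu z^2\ppp$, $\sss+\mu z^2\rrr$: substitute, and use the recurrence relations \eqref{eq::0100}--\eqref{eq::0120} that \emph{define} $\pp_k,\qq_k,\rr_k,\sS_k$ — the point is that those recurrences are engineered so that the ``diagonal'' data $\ppp=\pp_\mllll$, etc., make the transformed coefficients collapse back to the coefficients of \eqref{eq::0010}. \emph{(iii)} For $\opmA$, which involves $z\mapsto -1/z$, observe that $\famA=\famB\circ\famC$ (equivalently $\famC\circ\famB$) on $\mathbb{C}^*$, and that the prefactors and polynomial weights in \eqref{eq::0070} are consistent with composing the verified actions of $\opmB$ and $\opmC$ up to a \monodromy{} factor; one can either argue $\opmA$ directly by substitution or deduce it from the composition structure once $\opmB$ and $\opmC$ are settled. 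In all three cases one must also check, using the sign $(-1)^\mllll$ and the signs built into the lifts \eqref{eq::0140}--\eqref{eq::0160}, that the branch choices are consistent so that the transformed function is single-valued and holomorphic on $\covered{C}^*$ — but since \eqref{eq::0010} only constrains things locally and the coordinate changes are biholomorphisms of $\covered{C}^*$ onto itself, preservation of holomorphy is automatic once the algebraic identity holds.

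The main obstacle is bookkeeping rather than conceptual: one has to carry the derivative of the gauge factor $e^{\mu(z+1/z)}$ through the computation (it contributes $\mu(1-1/z^2)$, matching the $\mu(1-z^2)$ term in \eqref{eq::0010} after the inversion) and, more seriously, one must invoke the precise form of the recurrences \eqref{eq::0110}--\eqref{eq::0120} to see that the polynomial coefficients $\ppp,\qqq,\rrr,\sss$ with the \emph{diagonal} index $k=\mllll$ cause the transformed ODE coefficients to telescope back to those of \eqref{eq::0010}; verifying that telescoping in closed form is the technical heart. Since this identity, together with the explicit degrees $2(\mllll-1),2\mllll,2(\mllll-1),2\mllll$ of $\ppp,\qqq,\rrr,\sss$, is precisely the content established in \cite{BT1}, I would cite that reference for the polynomial identities and present here only the reduction of \eqref{eq::0010} to the first-order system and the chain-rule substitutions for each of the three coordinate changes, thereby assembling the three verifications into the statement of Theorem~\ref{t:010}.
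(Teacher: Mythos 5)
Your overall strategy---direct substitution of \eqref{eq::0070}--\eqref{eq::0090} into Eq.~\eqref{eq::0010}, elimination of the higher derivatives of $\somE$ via the equation itself (equivalently, your reduction to a first-order system in $(\somE,\somE'-\mu\somE)$), and separate treatment of the three coordinate changes---is the same verification the paper carries out. However, there is a genuine gap in the mechanism you propose for making the polynomial coefficients ``telescope.'' You invoke the defining recurrences \eqref{eq::0100}--\eqref{eq::0120}, but these relate the pair at index $k$ to the pair at index $k-1$; they do not, by themselves, let you eliminate the derivatives $\ppp',\qqq',\rrr',\sss'$ (and their second derivatives) that inevitably appear once you differentiate the right-hand sides of \eqref{eq::0070}, \eqref{eq::0080}. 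What is actually needed---and what the paper uses---is the \emph{closed} first-order differential system \eqref{eq::0180} satisfied by the diagonal quadruple $\ppp,\qqq,\rrr,\sss$ alone, which expresses $z^2\ppp',\ \qqq',\ z^2\rrr',\ z^2\sss'$ back in terms of $\ppp,\qqq,\rrr,\sss$ with no reference to lower-index polynomials. That system is a nontrivial consequence of the recurrences established in \cite{BT2}, not a restatement of them, and it is the technical heart of the proof; citing \cite{BT1} for the degrees of the polynomials does not supply it.

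Two smaller points. First, your fallback of deducing the $\opmA$ case from $\famA=\famC\circ\famB$ together with the already-verified $\opmB$ and $\opmC$ would require the composition rule $\opmC\circ\opmB\propto\opmA$ (Eq.~\eqref{eq::0250}), which in the paper is established \emph{after} Theorem~\ref{t:010} and whose proof needs the additional argument-transformation identities \eqref{eq::0340}--\eqref{eq::0360}; the direct substitution route the paper takes for all three operators is shorter and avoids this ordering issue. Second, your remark that preservation of holomorphy on $\covered{C}^*$ is automatic once the algebraic identity holds is correct and consistent with the paper's treatment, which checks the differential identity locally and handles the lift ambiguities separately via the Convention \eqref{eq::0140}--\eqref{eq::0160}.
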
\noindent
In other words,
these leave the space $\mathbf\Omega$
of solutions to Eq.~\eqref{eq::0010} invariant.
\begin{proof}[Proof outline]
Let us substitute the  right-hand side expressions
of the formulas
\eqref{eq::0070}-\eqref{eq::0090}
to Eq.~\eqref{eq::0010} and eliminate from the result
the higher derivatives $\somE'',\somE''' $ with the help of
the same equation \eqref{eq::0010}.
In cases of $\opmA$ and $\opmB$, the resulting expressions
contain also the first and second order derivatives of the polynomials 
$\ppp,\!\qqq,\!\rrr,\!\sss$. They are eliminated with the help of the equations
\begin{equation}
\label{eq::0180}
\begin{aligned}
z^2 \ppp'\argz =\;&\big(\mu + (\mllll-1) z\big) \ppp\argz - \qqq\argz + (-1)^\mllll z^2 \rrr\argz
,
\\
\qqq'\argz =\;& \big(\lambda - (\mllll+1) \mu z\big) \ppp\argz + \mu\, \qqq\argz + (-1)^\mllll \sss\argz
,
\\
z^2 \rrr'\argz =\;&(-1)^{\mllll+1} \big(\lambda + \mu^2\big) \ppp\argz + z \big(2 (\mllll-1) - \mu z\big) \rrr\argz - \sss\argz
,
\\
z^2 \sss'\argz =\;&(-1)^{\mllll+1} \big(\lambda + \mu^2\big) \qqq\argz
+ z^2 \big(\lambda - (\mllll+1) \mu z\big) \rrr\argz
+ \big((\mllll-1) z-\mu\big) \sss\argz
.
\end{aligned}
\end{equation}
which these functions obey (see \cite{BT2}). 
Then,
in all the three cases,
algebraic
simplifications finish with 
identically
zero result.
\end{proof}

Leaning on the specific properties of the functions $\ppp,\!\qqq,\!\rrr,\!\sss$,
in every other way,
the proof reduces to straightforward computations due to explicit
form of the basic formulas
\eqref{eq::0070}-\eqref{eq::0090}. Similar (though somewhat more lengthy)
computations allows one to obtain the {\it composition rules\/}
of the $\op$-operators, provided their domains
are restricted to the space $\mathbf\Omega$
of solutions to Eq.\ \eqref{eq::0010}. Namely, these read
\begin{eqnarray}
\label{eq::0190}
\opmA\circ\opmA&=&-\delTa\cdot \Id 
,\\
\label{eq::0200}
\opmB\circ\opmB&=&
(2\omega)^2(\lambda+\mu^2)
\delTa\cdot \Mono
,\\
\label{eq::0210}
\opmC\circ\opmC&=&
(2\omega)^2(\lambda+\mu^2)\cdot
\hphantom{-} \Id, 
\\
\label{eq::0220}
\opmA\circ\opmB&=&
  \hphantom{-}\delTa\cdot\Mono^{-1}\circ\opmC
,\\
\label{eq::0230}
\opmB\circ\opmA&=&
-
 \delTa
\cdot \opmC
,\\
\label{eq::0240}
\opmB\circ\opmC&=&
  -
(2\omega)^2(\lambda+\mu^2)\cdot
\Mono\circ\opmA
,\\
\label{eq::0250}
\opmC\circ\opmB&=&\hphantom{-}
(2\omega)^2(\lambda+\mu^2)\cdot
\opmA 
,\\
\label{eq::0260}
\opmC\circ\opmA&=&\hphantom{-}\opmB 
,\\
\label{eq::0270}
\opmA\circ\opmC&=&
  -\Mono^{-1}\circ\opmB
\\
  \label{eq::0280}
&&\hspace{-6.5em}
\rlap{where $\delTa=z^{2(1-\mllll)}\big(\ppp\sss-\qqq\rrr\big)$.}
\end{eqnarray}
Here $ \Mono $ and $ \Mono^{-1} $ denote the \monodromy{} transformation 
and its inverse which act on holomorphic functions in $\covered{C}^*$.
We have
defined the \monodromy{} transformation $\mathrm M$ in action to points of
$\covered{C}^*$ by the rule \eqref{eq::0170}.
For functions on $\covered{C}^*$, the \monodromy{} transformation
is inferred from it 
as an ancillary one as follows
$$%
(\Mono^{\pm1}\somE)(\coveredz)=\somE(\mathrm{M}^{\pm1}\coveredz{}).
$$
In a more thorough language
transformation $\Mono\equiv\Mono^1 $ 
can also be described 
as the result of
point-wise analytic continuations of $\somE$ 
along the arcs projected to
full circles in ${\mathbb{C}}^*$
with centers situated at zero
which are passed in
the
anticlockwise direction (clockwise for inverse \monodromy{} transformation $\Mono^{-1}$).
On lifts of functions holomorphic on ${\mathbb{C}}^*$ the transformations
$\Mono^{\pm1}$ act as the identical map.
Eq.\ \eqref{eq::0010}
is invariant with respect to such operations (which are the linear operators in fact)
as well as the
space of its solutions. More generally, the set of functions
holomorphic in  $\covered{C}^*$ 
is \monodromy{} invariant.

The \monodromy{} transformation arises in formulas \eqref{eq::0190}-\eqref{eq::0270}
as a consequence of composition rules for transformations of arguments
of the functions $\somE$ and $\somE'$ involved in
the formulas
\eqref{eq::0070}-\eqref{eq::0090}.
The results of
such pairwise compositions
are described by the following table of rules
\begin{equation}\label{eq::0300} 
\mbox{
\begin{tabular}[c]{l|ccc }
 $\circ$ & $ \mathrm{A}$& $ \mathrm{B} $  & $\mathrm{C} $   \\[0.1ex]
\hline\\[-1.9ex]
$ \mathrm{A} $ & $ Id  $ & $ \mathrm{M}^{-1}\circ \mathrm{C}  $ & $ \mathrm{M}^{-1}\circ \mathrm{B} $   \\
$ \mathrm{B} $ & $ \mathrm{C}$         & $ \mathrm{M} $  & $ \mathrm{M} \circ \mathrm{A} $   \\
$ \mathrm{C} $ & $  \mathrm{B}$        & $  \mathrm{A} $ & $  Id  $   \\
\end{tabular}}
\end{equation}
which is a direct consequence of
definitions
\eqref{eq::0140}-\eqref{eq::0160}.

In view of appearance of the \monodromy{} operator, the collection of the
composition rules \eqref{eq::0190}-\eqref{eq::0270},
binding pairwise all the $\op$-operators,
 proves to be
non-closed or, better to say, incomplete. Indeed, once 
the \monodromy{} transformation had  there appeared,
one must add to them
the rules describing compositions of $\op$-operators with the very \monodromy{} operator.
These take the form of commutation rules and read
\begin{eqnarray}
\label{eq::0310}
\Mono^{\pm1}\circ
\opmA
&= &
\opmA\circ\Mono^{\mp{}k},
\;
\\
\label{eq::0320}
  \Mono^{\pm{}k}\circ
\opmB
&= &
\opmB\circ \Mono^{\pm{}k},
\\
\label{eq::0330}
  \Mono^{\pm{}k}\circ
\opmC
&= &
\opmC\circ \Mono^{\mp{}k},\;\forall\, k\in\mathbb{Z}.
\end{eqnarray}
Thus, in the first and last cases, the commutation inverses
the \monodromy{} operator but keeps it unaltered in the second case.

The two comments concerning the {\it proof\/} of
\Eqs\
\eqref{eq::0190}-\eqref{eq::0270} are now apt.

{\it At first}, as it has been noted, they can be obtained by means of
straightforward computation  on the base of the formulas
\eqref{eq::0070}-\eqref{eq::0090}. Since they are only valid on the space of
solutions to Eq.\ \eqref{eq::0010}, it is used in calculations
for elimination of higher derivatives of the function $\somE$. However,
just like as in the proof of the theorem \ref{t:010},
the crucial role play here the specialties of the functions
$\ppp,\!\qqq,\!\rrr,\!\sss$. 
Indeed, expanding the left-hand sides of \Eqs{~}\eqref{eq::0190}-\eqref{eq::0270}, their derivatives arise.
As above, these are eliminated with the help of
\Eqs{~}\eqref{eq::0180}.
Another effect to the functions $\ppp,\!\qqq,\!\rrr,\!\sss$ is produced by the
replacements of their argument $z$. As a result, the expressions with some
$\ppp,\!\qqq,\!\rrr,\!\sss$ with arguments either $-1/z$ or $-z$ or $1/z$
arise. Uniformity of arguments is achieved
by means of the following transformations\\[0.5em]
\noindent
\parbox{0.99\linewidth}{
\begin{equation}
\label{eq::0340}
\begin{aligned}
\ppp(-z^{-1})
=&
(-1)^{\mllll+1} z^{-2 \left(\mllll-1\right)} \ppp(z)
,
\\
\qqq(-z^{-1})
=&
z^{-2\mllll}
\left((-1)^\mllll \mu\, \ppp(z) +z^2 \rrr(z)\right)
,
\\
\rrr(-z^{-1})
=&
z^{-2 (\mllll-1)} \left(\mu z^2 \ppp(z) + \qqq(z)\right)
,
\\
\sss(-z^{-1})
=&
-z^{-2\mllll} \left(
\mu\left(\mu z^2 \ppp(z) + \qqq(z)\right)
+(-1)^\mllll z^2 \left(\mu z^2 \rrr(z) + \sss(z)\right)
\right);
\end{aligned}
\hspace{0.8em}
\end{equation}
}
\\[0.3em]\phantom{.}

 \noindent
\parbox{0.99\linewidth}{
\begin{equation}
\label{eq::0350}
\begin{aligned}
\ppp(-z)
=&\;
(-1)^{\mllll+1} (\lambda + \mu^2)^{-1} \left(\mu z^2 \rrr(z) + \sss(z)\right)
,
\\
\qqq(-z)
=&\;
\mu z^2 \ppp(z) + \qqq(z)
+(-1)^\mllll (\lambda + \mu^2)^{-1} \mu z^2 \left(\mu z^2 \rrr(z) + \sss(z)\right)
,
\\
\rrr(-z)
=&\;
\rrr(z)
,
\\
\sss(-z)
=&\;
(-1)^{\mllll+1} (\lambda + \mu^2) \ppp(z)
- \mu z^2 \rrr(z);
\end{aligned}
\hspace{2.0em}
\end{equation}
}
\\[0.3em]\phantom{.}

\noindent
\parbox{0.99\linewidth}{
\begin{equation}
\label{eq::0360}
\begin{aligned}
\ppp(z^{-1})
=&\;
(\lambda + \mu^2)^{-1} z^{-2 (\mllll-1)} \left(\mu z^2 \rrr(z) + \sss(z)\right)
,
\\
\qqq(z^{-1})
=&\;
(\lambda + \mu^2)^{-1} z^{-2\mllll} \left(\lambda z^2 \rrr(z) - \mu \,\sss(z)\right)
,
\\
\rrr(z^{-1})
=&\;
z^{-2 (\mllll-1)} \left(\mu z^2 \ppp(z) + \qqq(z)\right)
,
\\
\sss(z^{-1})
=&\;
z^{-2\mllll} \left(\lambda z^2 \ppp(z) - \mu \,\qqq(z)\right),
\end{aligned}
\hspace{9.9em}
\end{equation}
}

\noindent
which manifest the characteristic property of the polynomials $\ppp,\!\qqq,\!\rrr,\!\sss$
 \cite{BT2}.

The above relationships suffice for the proving  validity of the rules
 \eqref{eq::0190}-\eqref{eq::0270}, \eqref{eq::0310}-\eqref{eq::0330}.

{\it At second}, the quantity $\delTa$ defined in \eqref{eq::0280}
as a function of $z$ (a power function  times a polynomial) does not actually depend on this
indeterminate. This is proven by means of computation of its derivative 
which turns out to be the identical zero 
as a consequence of \Eqs\ \eqref{eq::0180}. $\delTa$ is therefore a constant
or, more exactly, a function (polynomial) of the parameters $\lambda$
and $\mu$ (depending also on $\mllll$).
Given $\ppp,\!\qqq,\!\rrr,\!\sss$, it can be computed substituting into
the right-hand side of \eqref{eq::0280}
any numerical value of $z$. The most simple case corresponds to $z=1$, obviously.
In this way, we note that the same substitution to the equalities \eqref{eq::0360}
yields the following reductions
\begin{equation}  \label{eq::0370}
\qqq(1)=\rrr(1)-\mu\,\ppp(1),\; \sss(1)=(\lambda+\mu^2)\ppp(1)-\mu\,\rrr(1).
\end{equation}
Applying them, we obtain
\begin{equation}\label{eq::0380}
\delTa=
(\lambda+\mu^2)\ppp(1)^2-\rrr(1)^2
.
\end{equation}

Obviously, the properties of the composition rules \eqref{eq::0190}-\eqref{eq::0270}
in the cases $ \delTa\not=0$ and $ \delTa=0$ drastically differ.
We may consider the latter
of them degenerate as compared to the former and,
limiting our consideration
to a generic situation,
assume to be imposed below
 the restriction
\begin{equation}\label{eq::0390}
\delTa\not=0
.
\end{equation}
treating it as a genericity condition.

Similar remark concerns 
the vanishing of the sum $\lambda+\mu^2$.
Apart of 
affecting the properties
of the rules  \eqref{eq::0190}-\eqref{eq::0270},
it would considerably alter
the meaning of the equalities \eqref{eq::0350},\eqref{eq::0360}. 
The condition 
\begin{equation}
\label{eq::0400}
\lambda+\mu^2\not=0
\end{equation}
is the third  restriction, after inequalities $\mu\not=0$ and \eqref{eq::0390}, which will be assumed
to be fulfilled throughout.

It seems worthwhile to mention here yet another application of the
transformations \eqref{eq::0340},\eqref{eq::0350}.
They can be utilized 
for modification of the formulas
\eqref{eq::0070},\eqref{eq::0080}
by means of the explicit manifestation of the result of
replacements of the variable $z$ which are carried out over their
right-hand expressions. In particular,  the
functions
$\ppp,\!\qqq,\!\rrr,\!\sss$ with modified argument
can be 
expressed
in terms of themselves with the ``standard'' argument $z$.
In this way, the following equivalent
alternative definitions
of the operators $\opmA $ and $\opmB$
can be obtained:
 \begin{eqnarray}
 \label{eq::0410}
&&
 \begin{aligned}
\opmA[{\somE}](\coveredz)
=&\;
(-1)^\mllll
e^{\mu(z+1/z)}
\Bigl[
\ppp(z)
\big({\somE}'(\coveredco{\,-$1/$}\,\coveredz)
-\mu {\somE}(\coveredco{\,-$1/$}\,\coveredz)\big)+
\\&
\hphantom{ (-1)^\mllll e^{-\mu(z+1/z)} \big( } \hspace{-0.9em}
+\big(\qqq(z)+\mu z^2\ppp(z)\big) {\somE}(\coveredco{\,-$1/$}\,\coveredz)
\Bigr],
\end{aligned}
\\
 \label{eq::0420}
 &&
 \begin{aligned}
\opmB[{\somE}](\coveredz)
=&\;
 2\omega\, z^{1-{\mllll}}e^{\mu(z+1/z)}
\Bigl[
\big(\qqq(z)+\mu z^2\ppp(z)\big)
\bigl({\somE}'(\coveredcominus\!\coveredz)-\mu {\somE}(\coveredcominus\!\coveredz)\bigr) +
\\ & \hphantom{\;  2\omega\, z^{1-{\mllll}}e^{\mu(z+1/z)} }
+(\lambda+\mu^2)\ppp(z){\somE}(\coveredcominus\!\coveredz)
\Bigr].
\end{aligned}
\end{eqnarray}

Among $\op$-operators, $\opmC$ is of particular
interest due its inherent
relations to Eq.~\eqref{eq::0010}.
Inter alia,
the following statement holds true.
\begin{proposition}
Let the function $\somE_\zeta$ holomorphic in $\covered{C}^*$
be an eigenfunction of the operator $\opmC$ with
eigenvalue $\zeta\not=0$. Then $\somE_\zeta$ verifies Eq.~\eqref{eq::0010}
with some $\lambda\not=-\mu^2$.
\end{proposition}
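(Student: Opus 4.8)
The plan is to compute the square $\opmC\circ\opmC$ directly from the explicit formula \eqref{eq::0090}, \emph{without} invoking the composition rule \eqref{eq::0210}, which a priori is available only on the space $\mathbf\Omega$ whose membership is exactly what we want to establish. Since $\opmC$ consists of one differentiation, multiplication by explicit powers of $z$, and the argument substitution $\coveredz\mapsto 1/\coveredz$ (the lift fixed in \eqref{eq::0160}), its square is a bona fide second order linear differential operator acting on \emph{all} functions holomorphic in $\covered{C}^*$; applied to an eigenfunction, it turns the eigenvalue relation into an ordinary differential equation.

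First I would note that, by the very choice \eqref{eq::0160}, the lift of the inversion is involutive, $1/(1/\coveredz)=\coveredz$ — which is precisely why the entry $\mathrm{C}\circ\mathrm{C}=Id$ in the table \eqref{eq::0300} carries no monodromy factor. Writing $h:=\somE'-\mu\somE$ and using $\tfrac{d}{d\coveredz}h(1/\coveredz)=-z^{-2}h'(1/\coveredz)$, a short computation gives the operator identity, valid for every $\somE$ holomorphic in $\covered{C}^*$,
\[
(\opmC\circ\opmC)[\somE]\;=\;-(2\omega)^2\Bigl(z^2\somE''+\bigl((l+1)z+\mu(1-z^2)\bigr)\somE'+\bigl(-\mu(l+1)z-\mu^2\bigr)\somE\Bigr),\qquad l=-\mllll .
\]
(As a sanity check: substituting Eq.~\eqref{eq::0010} into the right-hand side collapses it to $(2\omega)^2(\lambda+\mu^2)\somE$, i.e.\ recovers \eqref{eq::0210} on $\mathbf\Omega$.)

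Now I would apply this to $\somE_\zeta$. From $\opmC[\somE_\zeta]=\zeta\,\somE_\zeta$ one gets $(\opmC\circ\opmC)[\somE_\zeta]=\zeta^2\,\somE_\zeta$, so comparing with the displayed identity and dividing by $-(2\omega)^2\ne0$ yields
\[
z^2\somE_\zeta''+\bigl((l+1)z+\mu(1-z^2)\bigr)\somE_\zeta'+\Bigl(-\mu(l+1)z+\tfrac{\zeta^2}{4\omega^2}-\mu^2\Bigr)\somE_\zeta=0 ,
\]
which is exactly Eq.~\eqref{eq::0010} with $l=-\mllll$ and $\lambda=\tfrac{\zeta^2}{4\omega^2}-\mu^2$. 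Since $\zeta\ne0$ and $\omega\ne0$, this $\lambda$ satisfies $\lambda+\mu^2=\tfrac{\zeta^2}{4\omega^2}\ne0$, i.e.\ $\lambda\ne-\mu^2$, which is the assertion.

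I expect the only delicate point to be bookkeeping in the covering space: one must use the fixed lift \eqref{eq::0160} consistently, so that the two nested substitutions $\coveredz\mapsto 1/\coveredz$ genuinely cancel and no spurious monodromy creeps into $\opmC\circ\opmC$; after that the derivation is an elementary differentiation. A slightly more pedestrian variant avoids $\opmC\circ\opmC$ altogether: read $\opmC[\somE_\zeta]=\zeta\,\somE_\zeta$ as the functional relation $2\omega z^{\mllll-1}\bigl(\somE_\zeta'(1/\coveredz)-\mu\somE_\zeta(1/\coveredz)\bigr)=\zeta\,\somE_\zeta(\coveredz)$, write also its image under $\coveredz\mapsto1/\coveredz$, solve the resulting $2\times2$ linear system for $\somE_\zeta(1/\coveredz)$ and $\somE_\zeta'(1/\coveredz)$ in terms of $\somE_\zeta(\coveredz),\somE_\zeta'(\coveredz)$, differentiate once more and eliminate; this reproduces the same second order equation and the same value of $\lambda$.
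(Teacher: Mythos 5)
Your proposal is correct and follows essentially the same route as the paper: the paper's proof rests on the identity \eqref{eq::0430}, $\opmC\circ\opmC[\somE]\equiv(2\omega)^2(\lambda+\mu^2)\somE-(2\omega)^2\,\mathrm{lhs}\mbox{\eqref{eq::0010}}$, valid for all holomorphic $\somE$, which after cancellation of the $\lambda$-terms is exactly your $\lambda$-free operator identity for $(\opmC\circ\opmC)[\somE]$. Your version merely makes explicit what the paper leaves as "obviously, there exists $\lambda$" — namely $\lambda=\zeta^{2}/(4\omega^{2})-\mu^{2}$, whence $\lambda+\mu^{2}\neq0$ — and your remark that the involutivity of the lift \eqref{eq::0160} keeps monodromy out of $\opmC\circ\opmC$ is a correct and worthwhile observation.
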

\begin{proof}
The {\em identity\/}
\begin{equation}
\label{eq::0430}
  \opmC\circ\opmC[\somE](\coveredz)
  \equiv
  (2\omega)^2(\lambda+\mu^2)\somE(\coveredz)
  -
  (2\omega)^2\mathrm{lhs}\mbox{\eqref{eq::0010}},
\end{equation}
where $\mathrm{lhs}\mbox{\eqref{eq::0010}}$ stands for the
left-hand side expression of Eq.~\eqref{eq::0010},
is established by means of straightforward computation.
If $ \opmC[\somE_\zeta]=\zeta\somE_\zeta$ then
it 
converts to the equality
$(2\omega)^2\mathrm{lhs}\mbox{\eqref{eq::0010}}=
((2\omega)^2(\lambda+\mu^2)-\zeta^2)\somE$. Obviously, there exists $\lambda$
such that it implies
$ \mathrm{lhs}\mbox{\eqref{eq::0010}}=0 $,
i.e.\
Eq.~\eqref{eq::0010} with such $\lambda$ is fulfilled.
\end{proof}
It is worth noting that
the only smooth eigenfunction of $ \opmC $
with {\em zero\/} eigenvalue
 is obviously 
the exponent
$\somE_1(z)=e^{\mu{}z}$
which 
verifies Eq.~\eqref{eq::0010} if and only if $\lambda=-\mu^2$.
On the other hand,
for such 
 $ \lambda$, the function
\begin{equation}\label{eq::0440}
\begin{aligned}
&\somE_2(z) = \exp(\mu z+ \varepsilon(z)),
\\
&\mbox{where }\varepsilon'(z) =z^{\mllll-1} e^{-\mu(z-1/z)}/\tilde\varepsilon(z),
\mbox{ where }\tilde\varepsilon'(z) = z^{\mllll-1} e^{-\mu(z-1/z)},
\end{aligned}
\end{equation}
determined 
by means of two subsequent quadratures of known functions,
also verifies Eq.~\eqref{eq::0010}%
       \footnote{Here we consider the solutions $ \somE_{1},\somE_{2}$ 
       in some vicinity of the unit
       alone and hence may identify therein their `genuine argument'
$\coveredz$ with its projection $z=\iota\,\coveredz$.}.
The two solutions $ \somE_1,\somE_2 $
constitute the basis of the linear space $\mathbf\Omega$ 
and any solution to
Eq.~\eqref{eq::0010} coincide with some their linear combination.
Since under generic conditions  representation of solutions to Eq.~\eqref{eq::0010}
in terms of
explicit quadratures is not possible\footnote{
Another
 known exception is the case of existence of a polynomial solution
which may arise just if the order is a negative integer, see \cite{T2,BT3}.
}, we may treat the case $\lambda=-\mu^2$
as degenerate.
This is another argument in favor of acceptance 
of the
genericity condition \eqref{eq::0400}.

Let now $ \somE $ be any non-trivial solution to Eq.~\eqref{eq::0010}. 
Let us define
the pair of non-coinciding functions
\begin{equation}
\label{eq::0450}
\EEpm{\pm}=\somE \pm (2\omega)^{-1}(\lambda+\mu^2)^{-1/2}\opmC[\somE].
\end{equation}
They both 
satisfy Eq.~\eqref{eq::0010} as well.
Using 
the equation arising from the identity \eqref{eq::0430} 
in the case $ \mathrm{lhs}\mbox{\eqref{eq::0010}}=0 $,
we obtain
\begin{equation}
\label{eq::0460}
\opmC[\EEpm{\pm}]=\opmC[\somE] \pm (2\omega)(\lambda+\mu^2)^{1/2}\somE=
\pm (2\omega)(\lambda+\mu^2)^{1/2}\EEpm{\pm}.
\end{equation}
Thus we may state the following.
           \begin{lemma}
If the condition \eqref{eq::0400}   
is met and
$\somE $
is any non-trivial solution to Eq.~\eqref{eq::0010}
 then the two functions \eqref{eq::0450},
also verifying Eq.~\eqref{eq::0010},
either are the eigenfunctions of the operator $ \opmC $
with eigenvalues $ \pm (2\omega)(\lambda+\mu^2)^{1/2} $, respectively,
or one of them is  the identically zero function.
The latter case takes place if and only if $ \somE $ is itself
an eigenfunction of $ \opmC $; then the corresponding
eigenvalue coincides with
one of the two ones pointed out above.
             \end{lemma}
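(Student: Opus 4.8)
The plan is to obtain the lemma as an almost immediate consequence of the identity~\eqref{eq::0430} together with Theorem~\ref{t:010}, so that no fresh computation is required. First I would observe that each of $\EEpm{\pm}$ is a solution of Eq.~\eqref{eq::0010}: by~\eqref{eq::0450} it is a linear combination of $\somE$ and $\opmC[\somE]$, both of which lie in the solution space $\mathbf\Omega$ by Theorem~\ref{t:010}, and $\mathbf\Omega$ is a linear space. Next, evaluating~\eqref{eq::0430} on a member of $\mathbf\Omega$, where the left-hand side of Eq.~\eqref{eq::0010} vanishes, gives $\opmC\circ\opmC[\somE]=(2\omega)^2(\lambda+\mu^2)\somE$; feeding this into $\opmC[\EEpm{\pm}]=\opmC[\somE]\pm(2\omega)^{-1}(\lambda+\mu^2)^{-1/2}\,\opmC\circ\opmC[\somE]$ reproduces~\eqref{eq::0460}, namely $\opmC[\EEpm{\pm}]=\pm(2\omega)(\lambda+\mu^2)^{1/2}\EEpm{\pm}$.

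From~\eqref{eq::0460} the dichotomy is read off directly. If $\EEpm{\pm}$ is not the identically zero function, then by definition it is an eigenfunction of $\opmC$ with eigenvalue $\pm(2\omega)(\lambda+\mu^2)^{1/2}$; the two candidate eigenvalues are nonzero and mutually distinct precisely because $\omega\neq0$ and, by the genericity condition~\eqref{eq::0400}, $\lambda+\mu^2\neq0$. To characterize degeneration, I would simply inspect~\eqref{eq::0450}: $\EEpm{+}\equiv0$ is equivalent to $\opmC[\somE]=-(2\omega)(\lambda+\mu^2)^{1/2}\somE$, and $\EEpm{-}\equiv0$ to $\opmC[\somE]=+(2\omega)(\lambda+\mu^2)^{1/2}\somE$. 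Hence one of $\EEpm{\pm}$ vanishes identically if and only if $\somE$ is itself an eigenfunction of $\opmC$, in which case its eigenvalue is one of the two listed values (the one of sign opposite to the vanishing member), and conversely if $\somE$ has one of those eigenvalues the corresponding $\EEpm{\mp}$ is identically zero. Moreover at most one of $\EEpm{\pm}$ can vanish, since $\somE$ cannot be simultaneously an eigenfunction for the two distinct eigenvalues; equivalently, if both vanished then $\somE=\half(\EEpm{+}+\EEpm{-})$ would vanish, contradicting non-triviality.

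For completeness I would also verify the remark made where~\eqref{eq::0450} is introduced, that $\EEpm{+}$ and $\EEpm{-}$ do not coincide: their difference is a nonzero scalar multiple of $\opmC[\somE]$, and $\opmC[\somE]\neq0$, for otherwise $\opmC\circ\opmC[\somE]=0$ would force $(2\omega)^2(\lambda+\mu^2)\somE=0$, hence $\somE=0$, again impossible. I do not expect a genuine obstacle; the only points demanding a little care are bookkeeping ones --- fixing a branch of $(\lambda+\mu^2)^{1/2}$ once and for all, tracking signs so that the ``$+$'' function is paired with the ``$+$'' eigenvalue, and checking that the standing hypotheses (in particular $\omega\neq0$ and~\eqref{eq::0400}) are exactly what keeps the two eigenvalues nonzero and separated, so that the degenerate case is the only way the alternative can fail.
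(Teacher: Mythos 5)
Your proposal is correct and follows essentially the same route as the paper: the paper likewise derives $\opmC\circ\opmC[\somE]=(2\omega)^2(\lambda+\mu^2)\somE$ from the identity \eqref{eq::0430} restricted to solutions, substitutes into \eqref{eq::0450} to obtain \eqref{eq::0460}, and reads off the eigenvalue alternative from there. Your extra bookkeeping (non-vanishing and distinctness of the two eigenvalues, at most one of $\EEpm{\pm}$ vanishing, non-coincidence of $\EEpm{+}$ and $\EEpm{-}$) is consistent with and slightly more explicit than the paper's terse presentation.
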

The next almost obvious 
statement is  
germane 
to the above 
assertion: 
               \begin{lemma}
There exists a solution to Eq.~\eqref{eq::0010} which is not an eigenfunction
of the given operator $ \opmC $.
               \end{lemma}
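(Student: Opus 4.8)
The plan is to argue by contradiction. Suppose that \emph{every} solution to Eq.~\eqref{eq::0010} were an eigenfunction of $\opmC$. By Theorem~\ref{t:010} the operator $\opmC$ restricts to a linear endomorphism of the two-dimensional solution space $\mathbf\Omega$, and a linear operator every vector of which is an eigenvector must be a scalar multiple of the identity; hence under this assumption $\opmC[\somE]=\kappa\,\somE$ for all $\somE\in\mathbf\Omega$ and a single constant $\kappa$ (one then has $\kappa^2=(2\omega)^2(\lambda+\mu^2)$ by \eqref{eq::0210}, but this value will not be needed).

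The next step is to specialise this identity to the point $\coveredone=(1,0)$. By Convention~\eqref{eq::0160} the lift of the inversion entering the definition \eqref{eq::0090} of $\opmC$ is $\coveredz=(\rho,\phi)\mapsto(1/\rho,-\phi)$, and — as was already noted when the various lifts were discussed — its unique fixed point is precisely $\coveredone$; moreover $\iota\,\coveredone=1$, so the prefactor $z^{\mllll-1}$ in \eqref{eq::0090} equals $1$ there. Consequently \eqref{eq::0090} collapses at $\coveredone$ to $\opmC[\somE](\coveredone)=2\omega\bigl(\somE'(\coveredone)-\mu\,\somE(\coveredone)\bigr)$, so the assumed relation $\opmC[\somE]=\kappa\,\somE$ becomes the single scalar identity $2\omega\,\somE'(\coveredone)=(\kappa+2\omega\mu)\,\somE(\coveredone)$, required to hold for every $\somE\in\mathbf\Omega$.

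Finally I would contradict this by a choice of initial data. The point $z=1$ is an ordinary point of Eq.~\eqref{eq::0010} (its leading coefficient $z^2$ is nonzero there and all coefficients are holomorphic nearby), so $\mathbf\Omega$ is two-dimensional and $\somE\mapsto(\somE(\coveredone),\somE'(\coveredone))$ is a linear isomorphism onto $\mathbb{C}^2$; picking the solution with $\somE(\coveredone)=0$ and $\somE'(\coveredone)=1$ turns the last identity into $2\omega=0$, contradicting $\omega\neq 0$. Hence some solution is not an eigenfunction of $\opmC$.

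The argument is short, and I do not anticipate a genuine obstacle; the only points needing care are the bookkeeping in the middle step — checking that $\coveredone$ is genuinely fixed by the \emph{particular} lift selected in Convention~\eqref{eq::0160}, so that the substitution $\coveredz\leftleftharpoons 1/\coveredz$ acts trivially there both on $\somE,\somE'$ and in the prefactor — and the (standard) remark that $z=1$ is an ordinary rather than a singular point of \eqref{eq::0010}. I would deliberately avoid the alternative, purely algebraic route of feeding $\opmC=\kappa\,\Id$ into the composition rules \eqref{eq::0250}, \eqref{eq::0260}, \eqref{eq::0270}: it reduces the claim to showing $\Mono\neq-\Id$ on $\mathbf\Omega$, and establishing that directly — it amounts to excluding a degenerate single-valued reduction of the equation — is more laborious than the one-point evaluation above.
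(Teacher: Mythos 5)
Your proposal is correct, and it shares the paper's opening move (assume every solution is an eigenfunction; conclude via the standard linear-algebra fact -- which the paper re-derives by hand using $\somE_1+\somE_2$ -- that $\opmC$ would act as a single scalar $\kappa$ on the two-dimensional space $\mathbf\Omega$) but then diverges in how the contradiction is extracted. The paper keeps the eigenfunction equation \eqref{eq::0480} in its two-point form, eliminates the derivatives from the rescaled Wronskian \eqref{eq::0470}, and observes that the resulting expression $\somE_1(\coveredz)\somE_2(1/\coveredz)-\somE_2(\coveredz)\somE_1(1/\coveredz)$ must vanish at the fixed point $\coveredone$ of the chosen lift of the inversion, contradicting non-vanishing of the Wronskian for independent solutions. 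You instead evaluate \eqref{eq::0480} itself at $\coveredone$ -- which is exactly the paper's Eq.~\eqref{eq::0500} -- and note that the resulting linear constraint on the Cauchy data $(\somE(\coveredone),\somE'(\coveredone))$ is incompatible with surjectivity of the initial-data map at the ordinary point $z=1$; the choice $\somE(\coveredone)=0$, $\somE'(\coveredone)=1$ kills it outright since $\omega\neq0$. Both arguments pivot on the same geometric fact, $1/\coveredone=\coveredone$, but yours is the more elementary of the two: it avoids the Wronskian computation entirely, does not need the non-vanishing of the eigenvalues (so conditions \eqref{eq::0210}, \eqref{eq::0400} are not invoked), and in fact does not even need the equal-eigenvalue step, since the one-point relation \eqref{eq::0500} already fails for the solution vanishing at $\coveredone$ whatever its putative eigenvalue. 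The only bookkeeping you flag -- that $\coveredone$ is genuinely fixed by the particular lift \eqref{eq::0160} and that $z=1$ is an ordinary point so solutions extend holomorphically to all of $\covered{C}^*$ -- checks out against the paper's conventions.
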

\begin{proof}
Let us suppose the opposite, i.e.\ let all the non-trivial solutions to the
given Eq.~\eqref{eq::0010} be at the same time eigenfunctions of
$ \opmC $. 
Let, in particular, $\somE_1$ and $\somE_2$ be such solutions which are
linear independent. Due to \eqref{eq::0210} and \eqref{eq::0400}, 
their eigenvalues $\zeta_1$ and $\zeta_2$
are both non-zero. 
We have $ \opmC(\somE_1+\somE_2)=\half(\zeta_1+\zeta_2)(\somE_1+\somE_2 )
+\half(\zeta_1-\zeta_2)(\somE_1-\somE_2)$. 
The sum $ \somE_1+\somE_2 $ verifies Eq.~\eqref{eq::0010}
and then, in accordance with supposition,
$ \opmC(\somE_1+\somE_2)= \zeta (\somE_1+\somE_2)$ for some non-zero constant $\zeta$.
Thus we obtain
$(\zeta_1+\zeta_2-2\zeta) (\somE_1+\somE_2) + (\zeta_1-\zeta_2)  (\somE_1-\somE_2)=0 $.
Since $\somE_1$ and $\somE_2$ are linear independent, both of these coefficients vanish.
Accordingly, it holds
$\zeta_1=\zeta_2=\zeta\not=0$. 
Now, let us mention that the value of the expression
\begin{equation}
       \label{eq::0470}
w=z^{1-\mllll}e^{-\mu(z+1/z)}
(\somE_1'(\coveredz)\somE_2(\coveredz)-\somE_2'(\coveredz)\somE_1(\coveredz)),
\end{equation}
where, as usual, $z=\iota\,\coveredz$,
does not depend on $ \coveredz $.
Indeed, straightforward computation using
\eqref{eq::0010} 
for eliminating of second derivatives
shows that
its derivative vanishes identically. 
The constant $w$ (the appropriately rescaled wronskian 
of Eq.~\eqref{eq::0010}, in fact)
is zero if and only if the solutions $ \somE_1 $ and $ \somE_2 $
are linear dependent.
Under conditions now assumed, 
it is therefore
nonzero.
But the solutions
$ \somE_1 $ and $ \somE_2 $
are
at the same time
the eigenfunctions of $\opmC$, i.e.\ they obey the equation
\begin{equation} \label{eq::0480} %
\somE'(\coveredz)
=\mu\somE(\coveredz)
+(2\omega)^{-1}\zeta\,z^{\mllll-1}\somE(1/\coveredz).
\end{equation}
Using it,
the derivatives $ \somE'_1, \somE'_2 $
can be eliminated 
and definition \eqref{eq::0470}
takes the form
\begin{equation*}
w= -(2\omega)^{-1}\zeta
e^{-\mu(1+1/z)}
( \somE_1(\coveredz)\somE_2(1/\coveredz)
 -\somE_2(\coveredz)\somE_1(1/\coveredz)).
\end{equation*}
But in accordance with \eqref{eq::0160}
$1/\coveredone=\coveredone$,
and the last factor in parentheses
still
vanishes at the point $\coveredz=\coveredone$, at least.
Thus, ultimately, we have come to
a contradiction. The assumption incompatible with the lemma assertion
is therefore
false and the latter
 is proven.
\end{proof}

It is also worth noting
that the fulfillment of Eq.\ \eqref{eq::0480} 
by eigenfunctions of $\opmC$
implies the following statement.
\begin{lemma}
If $\somE$ is an eigenfunction 
the operator
of $ \opmC $ with eigenvalue $\zeta$ then
\begin{equation} \label{eq::0500}
\somE'(\coveredone)
=\big(\mu
+\frac{\zeta}{2\omega}\big)\somE(\coveredone).
\end{equation}
\end{lemma}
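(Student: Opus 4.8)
The plan is to derive \eqref{eq::0500} simply by evaluating the first-order functional equation~\eqref{eq::0480} — which, as already observed in the text, is satisfied by every eigenfunction of $\opmC$ — at the distinguished point $\coveredone$. No genuinely new computation is required; the whole content is a careful specialization that exploits the lift convention~\eqref{eq::0160}.

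For completeness I would first recall why~\eqref{eq::0480} holds. Writing the eigenvalue relation $\opmC[\somE]=\zeta\somE$ by means of the explicit formula~\eqref{eq::0090} gives
\begin{equation*}
2\omega\,(\iota\,\coveredz)^{\mllll-1}\bigl(\somE'(1/\coveredz)-\mu\,\somE(1/\coveredz)\bigr)=\zeta\,\somE(\coveredz).
\end{equation*}
Per~\eqref{eq::0160}, the chosen lift $1/{}$ of the inversion acts on the similog model by $(\rho,\phi)\mapsto(1/\rho,-\phi)$; it is therefore an involution of $\covered{C}^*$ with $\iota(1/\coveredz)=1/(\iota\,\coveredz)$. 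Replacing $\coveredz\leftleftharpoons 1/\coveredz$ in the display above, using $1/(1/\coveredz)=\coveredz$, and dividing by $2\omega\,(\iota\,\coveredz)^{1-\mllll}$ yields exactly~\eqref{eq::0480}.

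Next I would put $\coveredz=\coveredone$. In the similog model $\coveredone=(1,0)$, so its projection is $z=\iota\,\coveredone=1$, whence the power factor $z^{\mllll-1}=1$; moreover $1/\coveredone=(1,0)=\coveredone$, so $\somE(1/\coveredone)=\somE(\coveredone)$. Feeding these two reductions into~\eqref{eq::0480} collapses it to
\begin{equation*}
\somE'(\coveredone)=\mu\,\somE(\coveredone)+(2\omega)^{-1}\zeta\,\somE(\coveredone)=\bigl(\mu+\tfrac{\zeta}{2\omega}\bigr)\somE(\coveredone),
\end{equation*}
which is precisely~\eqref{eq::0500}. There is no real obstacle here; the only point that needs attention is the bookkeeping with the lifts — specifically that the adopted convention for the inversion map fixes $\coveredone$, so that the two occurrences of $\somE$ in~\eqref{eq::0480} coalesce at this point. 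This is the same property $1/\coveredone=\coveredone$ already invoked in the proof of the preceding lemma.
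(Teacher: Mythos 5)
Your proposal is correct and follows exactly the route the paper intends: the lemma is stated as an immediate consequence of Eq.~\eqref{eq::0480}, and you evaluate that relation at $\coveredz=\coveredone$ using $z^{\mllll-1}=1$ and the convention $1/\coveredone=\coveredone$ from \eqref{eq::0160}. The extra paragraph re-deriving \eqref{eq::0480} from \eqref{eq::0090} is sound bookkeeping but adds nothing beyond what the paper already has in place.
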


Summarizing the above relationships, we obtain the following statement.
                             \begin{theorem}
Let the condition \eqref{eq::0400}
be fulfilled and Eq.~\eqref{eq::0010} be given.
Then the solutions $\EEpm{+},\EEpm{-}$
of the Cauchy problems posed at the point $\coveredz=\coveredone$
with the initial data
obeying the restrictions
              \begin{equation}\label{eq::0510}
\EEpm{\pm}'(\coveredone)
=\big(\mu
\pm(\lambda+\mu^2)^{1/2}\big)\EEpm{\pm}(\coveredone).
\end{equation}
are at the same time the
eigenfunctions of the operator $ \opmC $
with eigenvalues 
$ \pm (2\omega)(\lambda+\mu^2)^{1/2} $, respectively.
They constitute the basis 
of the linear space $\mathbf\Omega$ of 
 solutions to Eq.~\eqref{eq::0010}
 and are unique 
 up to multiplications by constant factors.
 There are no other
 (i.e.\ linear independent with both 
 $\EEpm{+}$ and 
  $\EEpm{-}$, separately)
 eigenfunctions of $ \opmC $ obeying Eq.~\eqref{eq::0010}.
                                \end{theorem}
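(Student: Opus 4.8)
The idea is to glue together the three lemmas just proved with the composition rule \eqref{eq::0210}. First I would fix any non-trivial solution $\somE$ of Eq.~\eqref{eq::0010} that is \emph{not} an eigenfunction of $\opmC$; such a solution exists by the lemma asserting the existence of a solution to Eq.~\eqref{eq::0010} not proportional to its own $\opmC$-image. Forming the pair \eqref{eq::0450} for this $\somE$, neither of the functions $\EEpm{\pm}$ can vanish identically: by the lemma accompanying \eqref{eq::0460} that would force $\somE$ itself to be an eigenfunction of $\opmC$, against the choice of $\somE$. Hence, again by that lemma, $\EEpm{+}$ and $\EEpm{-}$ are genuine eigenfunctions of $\opmC$ lying in $\mathbf\Omega$, with eigenvalues $+(2\omega)(\lambda+\mu^2)^{1/2}$ and $-(2\omega)(\lambda+\mu^2)^{1/2}$ respectively. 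Under the standing hypotheses $\omega\not=0$ and \eqref{eq::0400} these two eigenvalues are distinct and non-zero, so the eigenfunctions are linearly independent; since $\mathbf\Omega$ is two-dimensional they form a basis of it.

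Next I would match these eigenfunctions with the advertised Cauchy data. Since $z=1$ is an ordinary point of Eq.~\eqref{eq::0010} (its leading coefficient $z^2$ does not vanish there), the evaluation map $\somE\mapsto(\somE(\coveredone),\somE'(\coveredone))$ is a linear isomorphism of $\mathbf\Omega$ onto $\mathbb{C}^2$. Applying the lemma yielding \eqref{eq::0500} to the eigenfunction $\EEpm{\pm}$, whose eigenvalue is $\pm(2\omega)(\lambda+\mu^2)^{1/2}$, gives precisely the relation \eqref{eq::0510}; moreover $\EEpm{\pm}(\coveredone)\not=0$, for otherwise \eqref{eq::0510} would make both $\EEpm{\pm}(\coveredone)$ and $\EEpm{\pm}'(\coveredone)$ vanish and $\EEpm{\pm}$ would be trivial. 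The set of solutions in $\mathbf\Omega$ obeying the homogeneous linear constraint \eqref{eq::0510} is the preimage, under the above isomorphism, of the kernel of a non-zero linear functional on $\mathbb{C}^2$, hence a one-dimensional subspace; it contains the non-zero eigenfunction just built, so it coincides with its span. Therefore the solution of the Cauchy problem posed by \eqref{eq::0510} is exactly that eigenfunction, determined up to a constant multiple — which establishes existence, the eigenfunction property, and uniqueness up to rescaling of $\EEpm{\pm}$.

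Finally, the claim that no further eigenfunctions occur follows by reversing the chain. If $F\in\mathbf\Omega$ is any eigenfunction of $\opmC$, then by \eqref{eq::0210} together with \eqref{eq::0400} its eigenvalue $\zeta$ satisfies $\zeta^2=(2\omega)^2(\lambda+\mu^2)$, so $\zeta=\pm(2\omega)(\lambda+\mu^2)^{1/2}$; the lemma behind \eqref{eq::0500} then shows $F$ obeys one of the two relations \eqref{eq::0510}, whence $F$ lies in the corresponding one-dimensional subspace and is a scalar multiple of $\EEpm{+}$ or of $\EEpm{-}$. Thus no eigenfunction of $\opmC$ in $\mathbf\Omega$ can be linearly independent of both $\EEpm{+}$ and $\EEpm{-}$. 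I expect the only delicate point to be the bookkeeping in the second paragraph — checking that \eqref{eq::0510} cuts out exactly a line in $\mathbf\Omega$ and that the eigenfunction produced from an arbitrary non-eigen solution $\somE$ actually lands on that line — everything else being immediate from the cited lemmas.
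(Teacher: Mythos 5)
Your proposal is correct and follows essentially the same route as the paper, which offers no separate argument for this theorem beyond the remark that it ``summarizes the above relationships'': you assemble the three preceding lemmas (existence of a non-eigenfunction solution, the dichotomy for the pair \eqref{eq::0450}, and the Cauchy-data relation \eqref{eq::0500}) together with the composition rule \eqref{eq::0210} exactly as intended. The extra bookkeeping you supply --- that \eqref{eq::0510} cuts out a line in the two-dimensional space $\mathbf\Omega$ via the evaluation isomorphism at $\coveredone$ --- is sound and merely makes explicit what the paper leaves implicit.
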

\begin{corollary}
\begin{equation}\label{eq::0520}
\EEpm{\pm}(\coveredone)\not=0.
\end{equation}
\end{corollary}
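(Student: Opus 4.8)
The plan is to argue by contradiction, leaning on the assertion of the preceding theorem that $\EEpm{+}$ and $\EEpm{-}$ are basis elements of $\mathbf\Omega$ — hence non-trivial solutions — together with the uniqueness of the Cauchy problem for Eq.~\eqref{eq::0010} at the point $\coveredone$. First I would note that the projection of $\coveredone$ is $z=\iota\,\coveredone=1$, where the leading coefficient $z^2$ of Eq.~\eqref{eq::0010} equals $1\neq0$; thus $\coveredone$ lies in a simply connected neighbourhood on which Eq.~\eqref{eq::0010} has holomorphic coefficients and a non-vanishing leading term, so the classical existence–uniqueness theorem for linear second-order ODEs applies there. In particular, any solution of Eq.~\eqref{eq::0010} vanishing together with its first derivative at $\coveredone$ is identically zero.

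Next, suppose for contradiction that $\EEpm{\pm}(\coveredone)=0$ for one of the two choices of sign. By the defining relation \eqref{eq::0510} the corresponding first derivative then satisfies $\EEpm{\pm}'(\coveredone)=\big(\mu\pm(\lambda+\mu^2)^{1/2}\big)\cdot 0=0$ as well. Hence $\EEpm{\pm}$ solves the homogeneous Cauchy problem at $\coveredone$ with zero initial data, whence $\EEpm{\pm}\equiv0$. This contradicts the theorem, by which $\EEpm{+}$ and $\EEpm{-}$ constitute a basis of $\mathbf\Omega$ and are in particular non-trivial. Therefore $\EEpm{\pm}(\coveredone)\neq0$, which is the claimed inequality \eqref{eq::0520}.

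There is essentially no obstacle here; the only point deserving a line of care is confirming that $\coveredone$ is an ordinary — not a singular — point of Eq.~\eqref{eq::0010}, which is immediate from $z^2=1$ at $z=1$. One could give the argument equivalently through the earlier Lemma yielding \eqref{eq::0500}: an eigenfunction $\somE$ of $\opmC$ with $\somE(\coveredone)=0$ would automatically have $\somE'(\coveredone)=0$ and hence vanish identically, again contradicting the basis property; applying this to $\EEpm{+}$ and $\EEpm{-}$ gives the corollary.
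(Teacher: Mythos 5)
Your argument is correct and is exactly the (implicit) reasoning behind the paper's unproved corollary: by \eqref{eq::0510} the vanishing of $\EEpm{\pm}(\coveredone)$ forces $\EEpm{\pm}'(\coveredone)=0$, and since $z=1$ is an ordinary point of Eq.~\eqref{eq::0010} (leading coefficient $z^2=1\neq0$), uniqueness of the Cauchy problem would make $\EEpm{\pm}$ identically zero, contradicting the basis property asserted in the preceding theorem. Nothing more is needed.
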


Up to now, we considered the parameter $\omega$ scaling
two operators described by \Eqs{~}\eqref{eq::0080} and \eqref{eq::0090}
as an
arbitrary non-zero constant.
Assuming here and below the fulfillment of the condition \eqref{eq::0400}, 
 we may fix it 
 by the following constraint
\begin{equation}
         \label{eq::0530}
(2\omega)^2(\lambda+\mu^2)=1.
\end{equation}
Its advantage is the most simple form of the eigenvalues 
which
the operator $ \opmC $ 
acquires
on the space of solutions to Eq.~\eqref{eq::0010}.
Indeed, in case of fulfillment of \eqref{eq::0530}  
\Eqs{~}\eqref{eq::0460} take the form
\begin{equation}
\label{eq::0540}
\opmC\EEpm{+}=\EEpm{+},\;\opmC\EEpm{-}=-\EEpm{-}.
\end{equation}

We shall derive here one more property of 
eigenfunctions of the operator $ \opmC $
which follows from the above relationships. 
Namely, it follows from
the constancy of the expression \eqref{eq::0470}  
that
\begin{equation*}
z^{1-\mllll}e^{-\mu(z+1/z)}
(\EEpm{+}'(\coveredz)\EEpm{-}(\coveredz)-\EEpm{-}(\coveredz)\EEpm{+}(\coveredz))
=
e^{-2\mu}
(\EEpm{+}'(\coveredone)\EEpm{-}(\coveredone)-\EEpm{-}'(\coveredone)\EEpm{+}(\coveredone)).
\end{equation*}
Besides, in view of 
Eq.\ \eqref{eq::0480}  
in which 
the substitutions
$\somE\leftleftharpoons\EEpm{\pm} $
and 
$\zeta\leftleftharpoons\pm1$
were 
carried out,
one has 
$
\EEpm{+}'(\coveredz)\EEpm{-}(\coveredz)-\EEpm{-}'(\coveredz)\EEpm{+}(\coveredz))
=
(2\omega)^{-1}z^{\mllll-1}
(\EEpm{+}(\coveredz)\EEpm{-}(1/\coveredz)
+\EEpm{-}(\coveredz)\EEpm{+}(1/\coveredz)
).
$ 
 As a consequence,  we obtain
\begin{equation}
\label{eq::0560}
 \EEpm{+}(\coveredz)\EEpm{-}(1/\coveredz)
+\EEpm{-}(\coveredz)\EEpm{+}(1/\coveredz)=
2e^{\mu(z+1/z-2)}\EEpm{+}(\coveredone)\EEpm{-}(\coveredone)
\not=0.
\end{equation}
Let us also specialize Eq.\ \eqref{eq::0560} %
to the cases
$\coveredz= \coveredmcone $
and $\coveredz= \coveredi$.
One has $1/ \coveredmcone =\coveredmpone  $
and $1/\coveredi=\coveredmi  $; therefore, the following 3-point relations
\begin{eqnarray}
\label{eq::0570}
 \EEpm{+}(\coveredmcone)\EEpm{-}(\coveredmpone)
+\EEpm{-}(\coveredmcone)\EEpm{+}(\coveredmpone)&=&
2e^{-4\mu}\EEpm{+}(\coveredone)\EEpm{-}(\coveredone),
\\
\label{eq::0600}
 \EEpm{+}(\coveredi)\EEpm{-}(\coveredmi)
+\EEpm{-}(\coveredi)\EEpm{+}(\coveredmi)&=&
2e^{-2\mu}\EEpm{+}(\coveredone)\EEpm{-}(\coveredone)
\end{eqnarray}
hold true.

%

In accordance with \eqref{eq::0540},
with respect to the basis of the space of solutions
to Eq.~\eqref{eq::0010} 
constituted by the functions $\EEpm{+}, \EEpm{-}$,
the operator $\opmC$ is represented by the 
diagonal matrix $\diag(+1,-1)$. 
It would be useful 
to possess the corresponding matrix representations of the operators  
$\opmA, \opmB$ as well. Their derivation is equivalent to computation
the expansions of the four solutions
$\{\opmA,\opmB\}\times\{\EEpm{+},\EEpm{-}\}$ with respect to the
above basis. 
To that end, beginning with Eqs.\ \eqref{eq::0410}, \eqref{eq::0420},
we eliminate the derivatives $\somE'$ with the help of Eq.\ \eqref{eq::0480}
which in case of fulfillment of Eq.\ \eqref{eq::0530} reads
\begin{equation} \label{eq::0610}  
\EEpm{\pm}'(\coveredz)
=\mu\EEpm{\pm}(\coveredz)
\pm(2\omega)^{-1}z^{\mllll-1}\EEpm{\pm}(1/\coveredz).
\end{equation}
In this way we obtain
\begin{eqnarray}
                  \label{eq::0620}
\hspace{-1.4em}
\opmA[\EEpm{\pm}](\coveredz)
&\!\!\!=\!\!\!&
e^{\mu(z+1/z)}
\big(
\mp
(2\omega)^{-1}
z^{1-\mllll}
\ppp(z)\cdot
 \Mono^{-1} \EEpm{\pm}(\coveredco{-}\!\coveredz 
 )
\\&&\hspace{4.7em}
+(-1)^\mllll
(\qqq(z)+\mu z^2\ppp(z))\cdot
\EEpm{\pm}(\coveredco{\,-$1/$}\,\coveredz 
)
\big),
\nonumber
\\ 
\label{eq::0630}
\hspace{-1.4em}
\opmB[\EEpm{\pm}](\coveredz)
&\!\!\!=\!\!\!&
e^{\mu(z+1/z)}
\big(
(2\omega)^{-1}
z^{1-\mllll}
\ppp(z)\cdot
\EEpm{\pm}(\coveredco{-}\!\coveredz 
)
\\&&\hspace{4.7em}
\mp
(-1)^\mllll
(\qqq(z)+\mu z^2\ppp(z))\cdot
 \Mono^{-1} \EEpm{\pm}(\coveredco{\,-$1/$}\,\coveredz 
 )
\big),
\nonumber
\end{eqnarray}
 We further utilize the following algebraic
{\it identity}
\begin{equation}\label{eq::0640}
\begin{aligned}
e^{\mu(z+1/z)}
\big[
         (2\omega)^{-1}\,z^{1-\mllll}\ppp(z)\, \mathcal{A}
         \mp{}(-1)^\mllll \,(\qqq(z)+\mu z^2\ppp(z))\,\mathcal{B}
\big]\equiv
\\
&\hspace{-24.8em}
{e^{2\mu}}
{(4\omega \somE_1(\coveredone)\somE_2(\coveredone))^{-1}}  \times
\\
&\hspace{-23.8em}
\big[
(
-z^{1 - \mllll}\ppp(z)\, \mathcal{A}
\pm 2\omega (-1)^\mllll  (\qqq(z)+\mu z^2\ppp(z)) \mathcal{B})
\cdot \big\{\!\big\{\somE_1,\somE_2 \big\}\!\big\}\!(\coveredz,1/\coveredz)
\\&  \hspace{-23.8em}             
+\mathbf{W}_{\!\{+\}}[ \mathcal{A}, \pm\mathcal{B};\somE_2](\coveredz,1/\coveredz)
      \somE_1(\coveredz)
+\mathbf{W}_{\!\{-\}}[ \mathcal{A},  \pm\mathcal{B};\somE_1](\coveredz,1/\coveredz)
      \somE_2(\coveredz)
\big]
\end{aligned}
\end{equation}
which is verified by straightforward computation.
Here
$z=\iota\,\coveredz$, $\somE_1 $ and $\somE_2$ stand for some functions holomorphic in $\covered{C}^*$,
$ \mathcal{A} $ and $ \mathcal{B} $ are arbitrary expressions (they cancel out), and
the following abbreviated notations 
are  utilized
\begin{equation} 
\label{eq::0650}
\big\{\!\big\{\somE_1,\somE_2 \big\}\!\big\}(\coveredz,\tilde\coveredz)
= \somE_1(\coveredz) \somE_2(\tilde\coveredz) + \somE_2(\coveredz) \somE_1(\tilde\coveredz)
 - 2 e^{\mu(z+1/z-2)}\somE_1(\coveredone)\somE_2(\coveredone),
\end{equation}
\begin{equation}
\label{eq::0660}
\hspace{-0.06em}
\begin{aligned}
\mathbf{W}_{\{\pm\}}[ \mathcal{A},  \mathcal{B};\somE](\coveredz,\tilde\coveredz)
=&
\pm
2\omega\,
(-1)^\mllll
\big(q(1/z)+\mu/z^{2}\ppp(1/z)\big)\,
\mathcal{A}\cdot \somE (\coveredz)
\\&
-2\omega
(-1)^\mllll
\big(q(z)+\mu\,z^2\ppp(z)\big)\,
\mathcal{B}\cdot \somE (\tilde\coveredz)
\\&
\mp
z^{\mllll-1}\ppp(1/z)
\mathcal{B}\cdot \somE (\coveredz)
+z^{1 - \mllll}\ppp(z)
\mathcal{A}\cdot \somE (\tilde\coveredz)
\\[0.5ex]
=&
\pm
2\omega\,
(-1)^\mllll
z^{2(1-\mllll)}\rrr(z)\, 
\mathcal{A}\cdot \somE (\coveredz)
\\&
-2\omega
(-1)^\mllll
z^{2(\mllll-1)}\rrr(1/z)\, 
\mathcal{B}\cdot \somE (\tilde\coveredz)
\\&
\mp
z^{\mllll-1}\ppp(1/z)
\mathcal{B}\cdot \somE (\coveredz)
+z^{1 - \mllll}\ppp(z)
\mathcal{A}\cdot \somE (\tilde\coveredz),
%
\end{aligned}
\end{equation}
the last equality taking place in view of  \Eqs{~}\eqref{eq::0360}.
If one substitutes into \eqref{eq::0640}
 $\mathcal{A}\leftleftharpoons
\mp\Mono^{-1} \EEpm{\pm}(\coveredrevco\!\coveredz)$,
$\mathcal{B} \leftleftharpoons \mp\EEpm{\pm}(\coveredmiv\,\coveredz )$
then its left-hand side
coincided with the right-hand side of \eqref{eq::0620}.
Similarly, after substitutions
 $\mathcal{A}\leftleftharpoons
 \EEpm{\pm}(\coveredrevco\!\coveredz)$ and
$\mathcal{B} \leftleftharpoons
\Mono^{-1}    \EEpm{\pm}(\coveredmiv\,\coveredz )$
 the left-hand side of \eqref{eq::0640}
coincides with the right-hand side of \eqref{eq::0630}.
Thus we may replace the right-hand sides of \eqref{eq::0620}
and \eqref{eq::0630} with the corresponding instances of \eqref{eq::0640}
in which we have also carried out the substitutions $\somE_1\leftleftharpoons\EEpm{+},
\somE_2\leftleftharpoons\EEpm{-},
   $
Before the writing down the result, it is important to note that in view of the above
selection of the functions $\somE_1, \somE_2$, the factor
$\{\{\somE_1, \somE_2\}\} $
in \eqref{eq::0640}
(and hence the whole first summand therein on the right)
 vanishes due to Eq.~\eqref{eq::0560}.
Accordingly, only $\mathbf W$-involving contributions remain and the results read
\begin{equation}\label{eq::0670}
\begin{aligned}
\opmA[\EEpm{\pm}](\coveredz)
=&
\mp{e^{2\mu}}
{(4\omega \EEpm{+}(\coveredone)\EEpm{-}(\coveredone))^{-1}}  \times
\\
&\hspace{0.7em}
\left(\hspace{0.3em}
\mathbf{W}_{\!\{+\}}
[ \mathcal{M}^{-1}\EEpm{\pm}(\coveredrevco\!\coveredz),\pm \EEpm{\pm}(\coveredmiv\,\coveredz)
;\EEpm{-}](\coveredz,1/\coveredz)
      \cdot\EEpm{+}(\coveredz)+
\right.
\\[-0.5em]&\hspace{0.7em}
\left.
+\mathbf{W}_{\!\{-\}}
[ \mathcal{M}^{-1}\EEpm{\pm}(\coveredrevco\!\coveredz),\pm \EEpm{\pm}(\coveredmiv\,\coveredz)
;\EEpm{+}](\coveredz,1/\coveredz)
     \cdot \EEpm{-}(\coveredz)
\right),
\end{aligned}
\end{equation}
\begin{equation}\label{eq::0680}
\begin{aligned}
\opmB[\EEpm{\pm}](\coveredz)
=&
{e^{2\mu}}
{(4\omega \EEpm{+}(\coveredone)\EEpm{-}(\coveredone))^{-1}}  \times
\\
&\hspace{0.7em}
\left(\hspace{0.3em}
\mathbf{W}_{\!\{+\}}
[ \EEpm{\pm}(\coveredrevco\!\coveredz),\pm \mathcal{M}^{-1}\EEpm{\pm}(\coveredmiv\,\coveredz)
;\EEpm{-}](\coveredz,1/\coveredz)
    \cdot  \EEpm{+}(\coveredz)+
\right.
\\[-0.3em]&\hspace{0.7em}
\left.
+\mathbf{W}_{\!\{-\}}
[ \EEpm{\pm}(\coveredrevco\!\coveredz),\pm \mathcal{M}^{-1}\EEpm{\pm}(\coveredmiv\,\coveredz)
;\EEpm{+}](\coveredz,1/\coveredz)
    \cdot  \EEpm{-}(\coveredz)
\right).
\end{aligned}
\end{equation}
Now, we have to note the following important property which the 
$\mathbf{W}$-functional possess. 
      \begin{theorem}
The values of the functionals
$\mathbf{W}_{\!\{+\}}$ and $\mathbf{W}_{\!\{-\}}$ with
arguments specified in \Eqs{}
\eqref{eq::0670} and \eqref{eq::0680}
do not depend on the value of 
$\coveredz$.
      \end{theorem}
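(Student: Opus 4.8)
The plan is the following. The four quantities whose $\coveredz$-independence is asserted are precisely the functions multiplying $\EEpm{+}(\coveredz)$ and $\EEpm{-}(\coveredz)$ on the right-hand sides of \eqref{eq::0670} and \eqref{eq::0680}, up to the common numerical factor $e^{2\mu}\bigl(4\omega\,\EEpm{+}(\coveredone)\EEpm{-}(\coveredone)\bigr)^{-1}$; so it is enough to show that each of these coefficient functions is constant in $\coveredz$. A preliminary remark identifies the target. By Theorem~\ref{t:010} the functions $\opmA[\EEpm{\pm}]$ and $\opmB[\EEpm{\pm}]$ lie in the two-dimensional space $\mathbf\Omega$ with basis $\{\EEpm{+},\EEpm{-}\}$, so each is a linear combination of $\EEpm{+}$ and $\EEpm{-}$ with \emph{constant} coefficients; and since, by \eqref{eq::0250}, \eqref{eq::0260} under the normalization \eqref{eq::0530}, $\opmC\opmA[\EEpm{\pm}]=\opmB[\EEpm{\pm}]$ and $\opmC\opmB[\EEpm{\pm}]=\opmA[\EEpm{\pm}]$, the combinations $\opmA[\EEpm{\pm}]\pm\opmB[\EEpm{\pm}]$ are eigenfunctions of $\opmC$ with eigenvalues $+1$ and $-1$; by the theorem asserting that $\EEpm{+},\EEpm{-}$ are, up to scalars, the only eigenfunctions of $\opmC$ in $\mathbf\Omega$ (with eigenvalues $+1$, $-1$ under \eqref{eq::0540}) this gives $\opmA[\EEpm{\pm}]=c_{+}\EEpm{+}+c_{-}\EEpm{-}$ and $\opmB[\EEpm{\pm}]=c_{+}\EEpm{+}-c_{-}\EEpm{-}$ with $c_{\pm}$ constant. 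This alone is not a proof — an identity $u\,\EEpm{+}+v\,\EEpm{-}\equiv0$ among holomorphic functions need not force $u\equiv v\equiv0$ — but it does cut the work: once the coefficient of $\EEpm{+}$ in \eqref{eq::0670} has been shown to be the constant $c_{+}$, comparison with the decomposition yields $(W-c_{-})\EEpm{-}\equiv0$ for the remaining coefficient $W$, whence $W\equiv c_{-}$ since $\EEpm{-}\not\equiv0$; the same holds for \eqref{eq::0680}. So it suffices to treat one $\mathbf{W}$-value per equation.

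I would do that by the direct route already used for $\delTa$ in \eqref{eq::0280} and for the rescaled Wronskian \eqref{eq::0470}: differentiate the chosen $\mathbf{W}$-value with respect to $\coveredz$ and show the derivative vanishes identically. Writing $z=\iota\,\coveredz$, the $\coveredz$-dependence enters only through (i) the polynomials $\ppp,\qqq,\rrr$ evaluated at $z$ and at $1/z$, (ii) the explicit powers of $z$, and (iii) $\EEpm{\pm}$ evaluated at the four points $\coveredz$, $1/\coveredz$, $\coveredrevco\!\coveredz$, $\coveredmiv\,\coveredz$ and their $\Mono^{\pm1}$-translates. Every first derivative of $\EEpm{\pm}$ produced by the differentiation I would remove with the eigenfunction equation \eqref{eq::0610} (equivalently \eqref{eq::0480} with $\zeta=\pm1$), which turns a derivative of $\EEpm{\pm}$ at a point into a polynomial multiple of $\EEpm{\pm}$ at the inverted point; a short check with the lift conventions \eqref{eq::0140}--\eqref{eq::0160} and the monodromy rule \eqref{eq::0170} confirms that inversion carries each of the four points back onto one of the four, up to a monodromy shift, so the set of arguments does not grow. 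The derivatives of the polynomial coefficients I would eliminate with \eqref{eq::0180} and its image under $z\mapsto1/z$, i.e.\ the argument-inversion identities \eqref{eq::0340}--\eqref{eq::0360}. The surviving terms form a polynomial-coefficient combination of products of values of $\EEpm{+}$ and $\EEpm{-}$ at pairs among the four points, and they should cancel — partly outright through the polynomial identities, partly (for the cross-products) by the bilinear relation $\{\{\EEpm{+},\EEpm{-}\}\}\equiv0$ of \eqref{eq::0560}, which is itself a consequence of the Wronskian constancy \eqref{eq::0470} — exactly the mechanism that took \eqref{eq::0620}, \eqref{eq::0630} into \eqref{eq::0670}, \eqref{eq::0680} in the first place.

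The difficulty I anticipate is bookkeeping rather than any missing identity. One must keep exact track, under \eqref{eq::0140}--\eqref{eq::0160} and the shifts \eqref{eq::0170}, of which point of $\covered{C}^*$ each occurrence of $\EEpm{\pm}$ sits at — including the $\Mono^{\pm1}$'s generated both by the substitutions above and by the commutation rules \eqref{eq::0310}--\eqref{eq::0330} — and then bring every term to a common form over the finite set of ``atoms'' ($\ppp,\qqq,\rrr$ at $z$ and at $1/z$; powers of $z$; $\EEpm{\pm}$ at the four points and their one-step translates), so that the cancellation becomes visible. Organised that way, the computation is a finite algebraic verification of the same sort as those already performed for \eqref{eq::0180} and \eqref{eq::0340}--\eqref{eq::0360}; the constants it delivers must agree with the $c_{\pm}$ of the first paragraph, so that \eqref{eq::0670}, \eqref{eq::0680} reduce to $\opmA[\EEpm{\pm}]=c_{+}\EEpm{+}+c_{-}\EEpm{-}$ and $\opmB[\EEpm{\pm}]=c_{+}\EEpm{+}-c_{-}\EEpm{-}$.
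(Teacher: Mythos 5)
Your core argument---differentiate each $\mathbf{W}$-value in $\coveredz$, eliminate $\EEpm{\pm}'$ via the eigenfunction relation \eqref{eq::0610}, eliminate $\ppp',\rrr'$ via \eqref{eq::0180}, rewrite the polynomials at $1/z$ through \eqref{eq::0360}, and check that the result collapses to zero---is exactly the paper's proof. The only divergence is your preliminary reduction to ``one $\mathbf{W}$-value per equation,'' which is dispensable (the paper simply verifies all four cases) and is in any case slightly under-justified: from $W_{+}$ constant you still need to identify that constant with $c_{+}$ (e.g.\ by evaluating at a suitable point) before you may conclude $(W_{-}-c_{-})\EEpm{-}\equiv0$, since $\EEpm{+}$ and $\EEpm{-}$ being linearly independent does not by itself force both coefficients in a pointwise identity $u\,\EEpm{+}+v\,\EEpm{-}\equiv0$ to vanish.
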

                \begin{proof}
It reduces to computation of the derivative 
of the corresponding expressions.
In them, the derivatives $\EEpm{\pm}'$ are eliminated with the help of 
Eq.~\eqref{eq::0610},
the derivatives 
$\ppp',\!\rrr'$ are eliminated with the help of \Eqs~\eqref{eq::0180},
and the functions $\ppp,\!\qqq,\!\rrr,\!\sss$
with the argument $1/\coveredz$ are expressed in terms of the
same functions with the argument $\coveredz$
with the help of \Eqs~\eqref{eq::0360}.
In all the four cases, 
the results of these transformations
prove to be reducible to identical zero.
                    \end{proof}
Thus \Eqs{~}\eqref{eq::0670}, \eqref{eq::0680} may be interpreted as
the decompositions with constant coefficients of 
their left-hand sides with respect to the basis $\{\EEpm{+},\EEpm{-}\}$.
We may bring them to ``more explicit'' forms by means of fixation of $\coveredz$
in expressions of the coefficients, i.e.\ in the factors 
$\mathbf{W}{\!\{\pm\}}[\cdots](\coveredz,1/\coveredz)$. 
The most simple results arise if $\coveredz\leftleftharpoons\coveredone$. They  read
\begin{equation}\label{eq::0690}
\begin{aligned}
\opmA[\EEpm{ \pm }](\covered z)
&=
\frac{e^{2\mu}}{4\omega}
\Big(
\hspace{1.3em}
\frac{\delTa_{+}}{\EEpm{+}(\coveredone)  }
(\EEpm{\pm}(\coveredrevco\!\coveredone 
) \mp \EEpm{\pm}(\coveredrevpro\!\coveredone 
))
\cdot
\EEpm{+}(\covered{z})
\\&\hphantom{  = \frac{e^{2\mu}}{4\omega}   \big( \,\, }
-\frac{\delTa_{-}}{\EEpm{-}(\coveredone) }
(\EEpm{\pm}(\coveredrevco\!\coveredone 
)  \pm  \EEpm{\pm}(\coveredrevpro\!\coveredone 
))
\cdot
\EEpm{-}(\covered{z})
\Big),
\end{aligned}
\end{equation}
\begin{equation}\label{eq::0700}
\begin{aligned}
\opmB[\EEpm{ \pm }](\covered z)
&=
\frac{e^{2\mu}}{4\omega}
\Bigl(
\hspace{1.3em}
\frac{\delTa_{+}}{\EEpm{+}(\coveredone)  }
(\EEpm{\pm}(\coveredrevco\!\coveredone 
) \mp \EEpm{\pm}(\coveredrevpro\!\coveredone 
))
\cdot
\EEpm{+}(\covered{z})
\\&\hphantom{  = \frac{e^{2\mu}}{4\omega}   \big( \,\, }
+\frac{\delTa_{-}}{\EEpm{-}(\coveredone) }
(\EEpm{\pm}(\coveredrevco\!\coveredone 
)  \pm  \EEpm{\pm}(\coveredrevpro\!\coveredone 
))
\cdot
\EEpm{-}(\covered{z})
\Bigr),
\end{aligned}
\end{equation}
\begin{equation}
 \label{eq::0710}
\mbox{where }
\delTa_{\pm}=
\ppp(1) \pm (-1)^\mllll2\omega\,\rrr(1).
\end{equation}
Let us note that in view of Eq.\ \eqref{eq::0530} and in accordance with 
expanded definition  
\eqref{eq::0380} of $\delTa  $ it holds
\begin{equation}
 \label{eq::0720}
\delTa_{+}\delTa_{-}=(2\omega)^{-2}\delTa.
\end{equation}
Hence the condition \eqref{eq::0390} can also be represented as 
the union of the two inequalities
\begin{equation}
 \label{eq::0730}
\delTa_{+}\not=0\not=\delTa_{-}
\end{equation}
which remove 
two algebraic curves from the set of values of the parameters $\lambda,\mu$.

A direct consequence of \Eqs{~}\eqref{eq::0690},\eqref{eq::0700} are the following 
explicit
{\it
matrix representations\/} $\matrA, \matrA $
of the operators $\opmA, \opmB $ with respect to the basis
$\{\EEpm{+},\EEpm{-}\}$
\begin{eqnarray}
                 \label{eq::0740}
                 &&
\begin{aligned}
\matrA  
&=
\frac{e^{2\mu}}{4\omega}
\begin{pmatrix}
\EEpm{+}(\coveredrevco\!\coveredone)  -  \EEpm{+}(\coveredrevpro\!\coveredone)
&\hspace{-0.5em}
-(\EEpm{+}(\coveredrevco\!\coveredone)    +   \EEpm{+}(\coveredrevpro\!\coveredone))
\\
\EEpm{-}(\coveredrevco\!\coveredone)   +   \EEpm{-}(\coveredrevpro\!\coveredone)
&\hspace{-0.5em}
-(\EEpm{-}(\coveredrevco\!\coveredone)  -   \EEpm{-}(\coveredrevpro\!\coveredone))
\end{pmatrix}
\mathrm{D}^+_-,
\end{aligned}
%
\\ &&
         \label{eq::0750}
\begin{aligned}
\matrB  
&=
\frac{e^{2\mu}}{4\omega}
\begin{pmatrix}
\EEpm{+}(\coveredrevco\!\coveredone)  -  \EEpm{+}(\coveredrevpro\!\coveredone)
&\hspace{-0.5em}
\EEpm{+}(\coveredrevco\!\coveredone)    +   \EEpm{+}(\coveredrevpro\!\coveredone)
\\
\EEpm{-}(\coveredrevco\!\coveredone)   +   \EEpm{-}(\coveredrevpro\!\coveredone)
&\hspace{-0.5em}
\EEpm{-}(\coveredrevco\!\coveredone)  -   \EEpm{-}(\coveredrevpro\!\coveredone)
\end{pmatrix}
\mathrm{D}^+_-
\end{aligned}
%
\\
&&
         \label{eq::0760}
\mbox{ where }
\mathrm{D}^+_-=\diag
\left(
\frac{\delTa_{+}}{\EEpm{+}(\coveredone)}
,
\frac{\delTa_{-}}{\EEpm{-}(\coveredone)}
\right).
\end{eqnarray}

Now let  us mention that in accordance with the composition rule 
\eqref{eq::0190} the matrix representation of the operator $\opmA$
has to fulfill the equation $\matrA^2=-\delTa\diag({1,1})$. 
A straightforward computation
with the matrix \eqref{eq::0740} shows that
this is the case if and only if
\begin{equation}
                 \label{eq::0770}
\frac{\delTa_{+}}{\EEpm{+}({\coveredone}) }
\big(\EEpm{+}({\coveredrevco\!\coveredone})-\EEpm{+}({\coveredrevpro\!\coveredone})\big)
=
\frac{\delTa_{-}}{\EEpm{-}({\coveredone})}
\big(\EEpm{-}({\coveredrevco\!\coveredone})-\EEpm{-}({\coveredrevpro\!\coveredone})\big).
\end{equation}%
This is yet another relationship constraining 
the values of the
eigenfunctions of the operator $\opmC$,
this time
at the points projected to $-1$ and $1$.
Applying it, the matrices $\matrA, \matrB$ can be transformed to
the following representations
\begin{eqnarray}
                 \label{eq::0780}
&&
\begin{aligned}
\matrA   
&=
\frac{e^{2\mu}}{4\omega}
\mathrm{D}^-_+ 
\begin{pmatrix}
\EEpm{-}(\coveredrevco\!\coveredone)  -  \EEpm{-}(\coveredrevpro\!\coveredone)
&
-(\EEpm{+}(\coveredrevco\!\coveredone)    +   \EEpm{+}(\coveredrevpro\!\coveredone))
\\
\EEpm{-}(\coveredrevco\!\coveredone)   +   \EEpm{-}(\coveredrevpro\!\coveredone)
&
-(\EEpm{+}(\coveredrevco\!\coveredone)  -   \EEpm{+}(\coveredrevpro\!\coveredone))
\end{pmatrix},
\end{aligned}
%
\\
                \label{eq::0790}
&&\begin{aligned}
\matrB  
&=
\frac{e^{2\mu}}{4\omega}
\mathrm{D}^-_+ 
\begin{pmatrix}
\EEpm{-}(\coveredrevco\!\coveredone)  -  \EEpm{-}(\coveredrevpro\!\coveredone)
&
\EEpm{+}(\coveredrevco\!\coveredone)    +   \EEpm{+}(\coveredrevpro\!\coveredone)
\\
\EEpm{-}(\coveredrevco\!\coveredone)   +   \EEpm{-}(\coveredrevpro\!\coveredone)
&
\EEpm{+}(\coveredrevco\!\coveredone)  -   \EEpm{+}(\coveredrevpro\!\coveredone)
\end{pmatrix}=\matrA\diag(1,-1),
\end{aligned}
%
\\
                \label{eq::0800}
                &&
\mbox{ where }
\mathrm{D}^-_+=\diag
\left(
\frac{\delTa_{-}}{\EEpm{-}(\coveredone)}
,
\frac{\delTa_{+}}{\EEpm{+}(\coveredone)}
\right).
\end{eqnarray}
Straightforward computations taking into account Eq.~\eqref{eq::0570}
shows that
\begin{equation}
\det\matrA = \delTa,\;\det\matrB = -\delTa.
\end{equation}
In view of the requirement \eqref{eq::0390}, the matrices $\matrA, \matrB$
are always invertible. The same holds true for the operators $\opmA, \opmB$.
Thus, we have the following statement.
\begin{theorem}
Under restrictions currently assumed,
$\op$-operators
determine linear
automorphisms of the space of solutions to \eqref{eq::0010}. 
\end{theorem}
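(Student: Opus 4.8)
The plan is to treat this theorem as the summary it is: by this stage everything needed is already on the table, so the proof reduces to collecting a few observations. First I would note that Theorem~\ref{t:010}, together with the manifest linearity of the formulas \eqref{eq::0070}--\eqref{eq::0090} in $\somE$, shows that each of $\opmA,\opmB,\opmC$ restricts to a \emph{linear} endomorphism of the two-dimensional space $\mathbf\Omega$. Hence what remains is bijectivity, and for a linear map of a two-dimensional space this is equivalent to the nonvanishing of the determinant of any matrix representing it; I would work throughout in the eigenbasis $\{\EEpm{+},\EEpm{-}\}$ of $\mathbf\Omega$ constructed above.

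The operator $\opmC$ is then immediate: by \eqref{eq::0540} it is represented in that basis by $\diag(1,-1)$, whose determinant is $-1\neq0$. For $\opmA$ and $\opmB$ I would start from the explicit representations $\matrA,\matrB$ of \eqref{eq::0780}--\eqref{eq::0790} (obtained from \eqref{eq::0740}--\eqref{eq::0750} via the constraint \eqref{eq::0770}) and expand the two $2\times2$ determinants. The cross terms appearing there are products of the form $\EEpm{+}(\coveredrevco\!\coveredone)\EEpm{-}(\coveredrevpro\!\coveredone)$, and they collapse precisely by the three-point relation \eqref{eq::0570}, leaving $\det\matrA=\delTa$ and $\det\matrB=-\delTa$. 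Since the genericity condition \eqref{eq::0390} — equivalently \eqref{eq::0730} — is among the standing assumptions, $\delTa\neq0$; therefore $\matrA,\matrB$ are invertible, and so are $\opmA,\opmB$ as operators on $\mathbf\Omega$. This already proves the claim for all three $\op$-operators.

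It is worth recording a cleaner variant that bypasses the matrices entirely. Under the standing assumptions (in particular the normalization \eqref{eq::0530}) the composition rules \eqref{eq::0190}, \eqref{eq::0200}, \eqref{eq::0210} read, on $\mathbf\Omega$,
\[
\opmA\circ\opmA=-\delTa\cdot\Id,\qquad \opmB\circ\opmB=\delTa\cdot\Mono,\qquad \opmC\circ\opmC=\Id,
\]
and since $\delTa\neq0$ by \eqref{eq::0390} while $\Mono$ is invertible on functions holomorphic in $\covered{C}^*$ (with inverse $\Mono^{-1}$), these identities exhibit the inverses $\opmA^{-1}=-\delTa^{-1}\opmA$, $\opmB^{-1}=\delTa^{-1}\Mono^{-1}\circ\opmB$, and $\opmC^{-1}=\opmC$ directly; this route does not even require the normalization, only $\mu\neq0$, $\omega\neq0$, \eqref{eq::0390} and \eqref{eq::0400}.

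I do not anticipate any genuine obstacle: the substantive work — the eigenbasis $\{\EEpm{\pm}\}$, the matrix representations \eqref{eq::0780}--\eqref{eq::0790}, the composition rules \eqref{eq::0190}--\eqref{eq::0210}, and the auxiliary identities \eqref{eq::0560}, \eqref{eq::0570}, \eqref{eq::0770} needed to bring them to usable form — has already been carried out. The one point that deserves to be stated with care is that an invertible representing matrix truly forces the operator to be an automorphism of $\mathbf\Omega$; this is legitimate exactly because $\mathbf\Omega$ is finite-dimensional and $\{\EEpm{+},\EEpm{-}\}$ is an honest basis of it, so that an endomorphism of $\mathbf\Omega$ is bijective if and only if its matrix in that basis is.
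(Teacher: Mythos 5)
Your main argument is correct and is essentially the paper's own: the paper derives $\det\matrA=\delTa$, $\det\matrB=-\delTa$ from the representations \eqref{eq::0780}--\eqref{eq::0790} using \eqref{eq::0570}, invokes \eqref{eq::0390} for invertibility, and dismisses $\opmC$ as evident from \eqref{eq::0540}. Your ``cleaner variant'' via the composition rules \eqref{eq::0190}--\eqref{eq::0210} is a genuinely different and valid route that the paper does not take: since $\opmA\circ\opmA=-\delTa\,\Id$, $\opmB\circ\opmB=(2\omega)^2(\lambda+\mu^2)\delTa\,\Mono$ and $\opmC\circ\opmC=(2\omega)^2(\lambda+\mu^2)\Id$ hold on $\mathbf\Omega$, and $\Mono$ is invertible there and commutes with $\opmB$ by \eqref{eq::0320}, explicit two-sided inverses drop out under only $\omega\neq0$, \eqref{eq::0390} and \eqref{eq::0400}, with no need for the eigenbasis, the normalization \eqref{eq::0530}, or the identities \eqref{eq::0560}, \eqref{eq::0570}, \eqref{eq::0770}. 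What the matrix route buys in exchange is the explicit formulas for $\matrA^{-1},\matrB^{-1}$ and the determinant values themselves, which the paper reuses immediately afterwards (e.g.\ in deriving $\mathbf M=\delTa^{-1}\matrB^2$ and $\det\mathbf M=1$); the only caveat for your shortcut is that the composition rules are themselves only asserted by ``straightforward computation'' in the paper, so the logical burden is merely relocated, not removed.
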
\noindent
(In case of the operator $\opmC$, this property follows from
definitions and is, in fact, evident.)

The composition rule \eqref{eq::0190} of $\op$-operators have gained us 
the constraint \eqref{eq::0770} characterizing certain variations of 
the eigenfunctions of $\opmC$. In a similar way, converting the rule
\eqref{eq::0200}, which now reads 
$\opmB\circ\opmB=\delTa  \Mono$,
to its matrix form, 
the matrix 
 $\mathbf{M}=\delTa^{-1} \matrB^2 $
of the \monodromy{} transformation $\Mono $
can be derived.
Specifically, using the representation \eqref{eq::0790} and Eq. \eqref{eq::0770},
 one obtains
\begin{equation}
             \label{eq::0810}
\hspace{-0.3em}
\begin{aligned}
&\mathbf{M}=
 e^{4\mu}\big(2\EEpm{+}(\coveredone)\EEpm{-}(\coveredone) \big)^{-1}\times
\\
&
\begin{pmatrix}
\EEpm{+}(\coveredrevco\!\coveredone)\EEpm{-}(\coveredrevco\!\coveredone)
+
\EEpm{+}(\coveredrevpro\!\coveredone)\EEpm{-}(\coveredrevpro\!\coveredone)
&
\EEpm{+}(\coveredrevco\!\coveredone)^2-\EEpm{+}(\coveredrevpro\!\coveredone)^2
\\
\EEpm{-}(\coveredrevco\!\coveredone)^2-\EEpm{-}(\coveredrevpro\!\coveredone)^2
&\hspace{-3em}
\EEpm{+}(\coveredrevco\!\coveredone)\EEpm{-}(\coveredrevco\!\coveredone)
+
\EEpm{+}(\coveredrevpro\!\coveredone)\EEpm{-}(\coveredrevpro\!\coveredone)
\end{pmatrix}.
\end{aligned}
\end{equation}
               Eq.~\eqref{eq::0570} then implies that
\begin{equation}
             \label{eq::0820}
\det \mathbf{M}
=1, \mbox{ and, therefore, }
\mathbf{M}^{-1}=\diag(1,-1)\,\mathbf{M}\diag(1,-1)
\end{equation}
is produced
from $ \mathbf{M} $ 
by means of the reversing signs of its antidiagonal elements.

The \monodromy{} transformation enables one to carry out analytic continuation
of solutions to Eq.\ \eqref{eq::0010}
from certain \subdomain{} to the whole domain $ \covered{C}^* $
by means of algebraic operations. 
It suffices to realize such continuation
for the solutions $\EEpm{\pm}$ constituting the basis of $\mathbf\Omega$. 
To that end, let us note
that if we introduce 
the two-element column $\mathbf\somE=(\EEpm{+},\EEpm{-})^\mathrm{T}$,
then the gist  
of the matrix \eqref{eq::0810} 
 is exactly to serve
the operator of the transformation
\begin{equation}\label{eq::0830}
\mathbf{\somE}(\mathrm{M}^k\coveredz)=\mathbf{M}^k\,\mathbf{\somE}( \coveredz), \;
k\in \mathbb{Z}.
\end{equation}
As far as one concerns the left-hand side, the map 
$ \mathbf{\somE}(\coveredz)\to \mathbf{\somE}(\mathrm{M}^k\coveredz) $ 
can be interpreted as certain ``shift'' of a subset of the domain 
where $ \mathbf{\somE} $ is evaluated.
Specifically, let 
the values of 
$ \mathbf{\somE}(\coveredz) $ be known (``have been computed'')
everywhere in the \subdomain{}
represented in frames of semilog model of $ \covered{C}^* $
by the semi-strip $\covered{C}^*_0:=\mathbb{R}_+ \times (-\pi,\pi)$
which projects bijectively to $\mathbb{C}^*\fgebackslash\mathbb{R}_-$.
The map $\coveredz\to \mathrm{M}^k\coveredz$ defined by Eq.\ \eqref{eq::0170}
takes $\covered{C}^*_0$ 
to 
the semi-strip $\covered{C}^*_k:=\mathbb{R}_+ \times (\pi k-\pi,\pi k+\pi)$.
Accordingly,
one can determine the value of $ \mathbf{\somE}$ at any point of 
$ \covered{C}^*_k $ from its value at certain point of $ \covered{C}^*_0 $
by means of Eq.\ \eqref{eq::0830}.
Since the union of all $ \covered{C}^*_k$ covers the whole $ \covered{C}^*$
but the rays $(\forall \rho>0,\pi k) $, 
where  $ \mathbf{\somE}$  still can be determined by continuity,
its value can be 
found 
in this way at any point 
of
the whole domain $ \covered{C}^*$ by means of an algebraic transformation.

The making use of the \monodromy{} transformation
for continuation of solutions from a \subdomain{} with closure of projection
coinciding with the complex plane is not specific for Eq.\ \eqref{eq::0010}.
However, in our case, there exists the opportunity to carry out 
continuation from a \subdomain{}  which is ``twice less''. Namely, we shall show
that the values on the eigenfunctions $\EEpm{\pm}$ can be determined 
by means of algebraic operations
at any point of their domain from their values in a \subdomain{}
projected to the semi-plane $\Re z>0$.
The corresponding transformations are derived from the
composition rules  \eqref{eq::0200}, \eqref{eq::0220} in which
one $\op$-operator is 
represented by the formula  \eqref{eq::0070} or  \eqref{eq::0080} 
while another one is taken in a matrix representation, see
\Eqs{~}\eqref{eq::0640}, \eqref{eq::0750}. They read
\begin{eqnarray}
\label{eq::0840}           
\EEpm{\pm}(\coveredrevco\!\coveredz 
) &=&\;
\half{}z^{\mllll-1}
\,e^{\mu(2-z-1/z)}\times
\\&&\hspace{-6.65em}
\bigg(
(-1)^\mllll2\omega\,z^{\mllll-1}
\rrr(1/z)
  \Big(
-\frac{ \EEpm{\pm}(\coveredrevco\!\coveredone) \mp \EEpm{\pm}(\coveredrevpro\!\coveredone) }%
     { \llap{$\delTa_{-}\,$}  \EEpm{+}\rlap{$(\coveredone) $}
     }
\cdot \EEpm{+}(1/\coveredz)
+ 
\frac{ \EEpm{\pm}(\coveredrevco\!\coveredone) \pm \EEpm{\pm}(\coveredrevpro\!\coveredone) }%
     {  \llap{$\delTa_{+}\,$}  \EEpm{-}\rlap{$(\coveredone) $}
     }
\cdot \EEpm{-}(1/\coveredz)
   \Big)+
\nonumber\\&& \hspace{-4em} 
+
\ppp(1/z)
    \Big(
\frac{ \EEpm{\pm}(\coveredrevco\!\coveredone) \mp \EEpm{\pm}(\coveredrevpro\!\coveredone) }%
     { \llap{$\delTa_{-}\,$} \EEpm{+}\rlap{$(\coveredone) $}  
}
\cdot \EEpm{+}(\coveredz)
+ 
\frac{ \EEpm{\pm}(\coveredrevco\!\coveredone) \pm \EEpm{\pm}(\coveredrevpro\!\coveredone) }%
     {\llap{$\delTa_{+}\,$} \EEpm{-}\rlap{$(\coveredone) $}  
}
\cdot \EEpm{-}(\coveredz)
 \Big)\bigg),
\nonumber
\\
%
%
%
\label{eq::0850}
\EEpm{\pm}(\coveredrevpro\!\coveredz 
) &=&\;
\pm\half{}z^{\mllll-1}
\,e^{\mu(2-z-1/z)}\times
\\&&\hspace{-6.65em}
\bigg(
(-1)^\mllll2\omega\,z^{\mllll-1}
\rrr(1/z)
  \Big(
 \frac{ \EEpm{\pm}(\coveredrevco\!\coveredone) \mp \EEpm{\pm}(\coveredrevpro\!\coveredone) }%
     { \llap{$\delTa_{-}\,$}  \EEpm{+}\rlap{$(\coveredone) $}
     }
\cdot \EEpm{+}(1/\coveredz)
 + 
\frac{ \EEpm{\pm}(\coveredrevco\!\coveredone) \pm \EEpm{\pm}(\coveredrevpro\!\coveredone) }%
     {  \llap{$\delTa_{+}\,$}  \EEpm{-}\rlap{$(\coveredone) $}
     }
\cdot \EEpm{-}(1/\coveredz)
   \Big) +
\nonumber\\&& \hspace{-4em} 
+
\ppp(1/z)
    \Big(
-\frac{ \EEpm{\pm}(\coveredrevco\!\coveredone) \mp \EEpm{\pm}(\coveredrevpro\!\coveredone) }%
     { \llap{$\delTa_{-}\,$} \EEpm{+}\rlap{$(\coveredone) $}  
}
\cdot \EEpm{+}(\coveredz)
+ 
\frac{ \EEpm{\pm}(\coveredrevco\!\coveredone) \pm \EEpm{\pm}(\coveredrevpro\!\coveredone) }%
     {\llap{$\delTa_{+}\,$} \EEpm{-}\rlap{$(\coveredone) $}  
}
\cdot \EEpm{-}(\coveredz)
\Big)\bigg),
\nonumber
\end{eqnarray}

\noindent
Let us remind the meaning of some notations utilized above. Here
$\coveredz$ denotes an arbitrary point of the Riemann surface
$\covered{C}^*$ and 
$z=\iota\,\coveredz$ is its projection to $\mathbb{C}^*=\mathbb{C}\fgebackslash0$
for which  $\covered{C}^*$ serves the universal cover (each point of the former is lifted to 
the two-side sequence of points from the latter).
The points 
$\coveredrevco\!\coveredz$
and 
$\coveredrevpro\!\coveredz$ belonging to $\covered{C}^*$
both project to $-z$. They are the boundary points of the non-closed arc 
projecting bijectively to the ``punctured circle'' 
$\puncturedS_{(-z)}=
\{\tilde z:\tilde z\in\mathbb{C},|\tilde z|= |z|, \tilde z\not=-z\}$, 
in the middle of which $\coveredz$
is situated. The points $\coveredrevco\!\coveredone$ and $\coveredrevpro\!\coveredone$,
projecting to $-1$,
represent 
the particular case of $\coveredrevco\!\coveredz$
and
$\coveredrevpro\!\coveredz$ 
arising if $\coveredz=\coveredone$. The inversion $\coveredz\to1/\coveredz $
is understood in accordance with the rule \eqref{eq::0160}. Correspondingly, the
``right half'' of $\covered{C}^*_{0} $,
the \subdomain{}
$ \half\covered{C}^*_{0} $ modeled 
by the half-strip $\mathbb{R}_+\times (-\pi/2,\pi/2)$
and projecting to the half-plane $\Re \tilde z >0 $, 
is invariant with respect to it.
It means that,
varying $\coveredz$ within $ \half\covered{C}^*_{0} $, 
the arguments of the functions $\EEpm{\pm}$
in the right-hand sides of \Eqs{~}\eqref{eq::0840} and \eqref{eq::0850}
remain well within the same \subdomain{}.
At the same time, the corresponding \subdomain{}s,
to which the arguments of the 
left-hand side functions of \Eqs{~}\eqref{eq::0840} and \eqref{eq::0850} 
belong,
are distinct; 
in particular, they do not intersect with $ \half\covered{C}^*_{0} $.
Indeed, it is easy to see that 
in case of the equation \eqref{eq::0840} 
the argument of the functions $\EEpm{\pm}$ on the left runs through the
\subdomain{} modeled by the half-strip $\mathbb{R}_+\times (\pi/2,3\pi/2)$
while 
for the left-hand side function in Eq.~\eqref{eq::0850} 
the \subdomain{} where it is evaluated
is modeled by
the half-strip $\mathbb{R}_+\times (-3\pi/2,-\pi/2)$, these maps being 1-to-1. 
Thus,
with the help of \Eqs{~}\eqref{eq::0840}, \eqref{eq::0850}, starting from $\half\covered{C}^*_{0} $,
one can continue 
both functions $\EEpm{\pm}$ to the \subdomain{}
modeled by the half-strip $\mathbb{R}_+\times (-3\pi/2,3\pi/2)$ which is ``three times wider''
than the original one.
In particular, it embodies 
$\covered{C}^*_{0} $ and, hence,  applying further 
the \monodromy{} transformation in a way discussed above, the functions
$\EEpm{\pm}$ can be continued to their whole domain $\covered{C}^*$,
all the operations involved in such transformations being algebraic.

It is important to note, however, that 
\Eqs{~}\eqref{eq::0840} and \eqref{eq::0850},  as they stand, 
can not be considered as the ready-to-be-used tools for continuation of the
eigenfunctions $\EEpm{\pm}$ from their \subdomain{} $\half\covered{C}^*_{0} $.
The point is that these contain 
the quantities 
$\EEpm{\pm}(\coveredrevco\!\coveredone)$ and $\EEpm{\pm}(\coveredrevpro
\!\coveredone)$
where the functions  $\EEpm{\pm}$ are evaluated at the points which belong neither to the
\subdomain{} from which the continuation has to be carried out, nor to
its boundary. 
Yet, the difficulty can be settled
as follows.

Let us evaluate Eq.~\eqref{eq::0840} at the point $\coveredz=\coveredmi$
and Eq.~\eqref{eq::0850} at the point $\coveredz=\coveredi$.
An 
inspection shows that in such cases
all the functions $\EEpm{\pm}$ of arguments depending on $\coveredz$
involved in these equations prove to be evaluated 
just at these two points. But the both points $\coveredi$ and $\coveredmi$
belong to the boundary of 
$\half\covered{C}^*_{0} $.
Hence, by continuity, the values of $\EEpm{\pm}( \coveredi)$ and $\EEpm{\pm}( \coveredmi)$
may be considered to be known. Then the two pairs of the above equations 
may be treated as the closed inhomogeneous linear system with four unknowns 
$\EEpm{\pm}(\coveredrevco\!\coveredone)$ and $\EEpm{\pm}(\coveredrevpro\!\coveredone)$.
It proves to be solvable 
leading to the following formulas 
\begin{eqnarray}
\label{eq::0860}
\EEpm{\pm}(\coveredrevco\!\coveredone 
) + \EEpm{\pm}(\coveredrevpro\!\coveredone 
)
&=&
(\delTa_{\mp}\EEpm{\pm}(\coveredone))^{-1}\times
\\&&\hspace{-6em}
\bigg(%
-\Imi^{\mllll+1}\big(\ppp(-\Imi)\EEpm{\pm}^2(\coveredi)
-(-1)^\mllll(\ppp(\Imi)\EEpm{\pm}^2(\coveredmi)
\big)
\nonumber\\[-1.1em]&&\hspace{-5em}
\mp(-1)^{\mllll}2\omega
\big(
\qqq(\Imi)+\qqq(-\Imi)-\mu(\ppp(\Imi)+\ppp(-\Imi))
\big)
\EEpm{\pm}(\coveredi)\EEpm{\pm}(\coveredmi)
\bigg),
\nonumber
\\
\label{eq::0870}
\EEpm{\pm}(\coveredrevco\!\coveredone 
) - \EEpm{\pm}(\coveredrevpro\!\coveredone 
)
&=& 
(\delTa_{\pm}\EEpm{\mp}(\coveredone))^{-1}\times
\\&&\hspace{-10em}
 \bigg(
 -\Imi^{\mllll+1}
\big(\ppp(-\Imi)\EEpm{+}(\coveredi)\EEpm{-}(\coveredi)
 +(-1)^\mllll
      \ppp(\Imi)\EEpm{+}(\coveredmi)\EEpm{-}(\coveredmi)
 \big)
\nonumber\\[-0.9em]&&\hspace{-11em}
+(-1)^{\mllll}2\omega
\big(
(
 \qqq(\mp\Imi)-\mu\,\ppp(\mp\Imi)
 )
 \EEpm{+}(\coveredi)\EEpm{-}(\coveredmi)
-
(
 \qqq(\pm\Imi)-\mu\,\ppp(\pm\Imi)
 )
 \EEpm{+}(\coveredmi)\EEpm{-}(\coveredi))
\big)
 \bigg).
 \nonumber
\end{eqnarray}
These allow one to eliminate
their left-hand side sums and differences from the
right-hand sides of 
\Eqs{~}\eqref{eq::0840}, \eqref{eq::0850},
exempting 
them from presence of quantities unknown in advance.
\Eqs{~}\eqref{eq::0840}, \eqref{eq::0850}
in conjunctions with \Eqs{~}\eqref{eq::0860}, \eqref{eq::0870}
and Eq.\ \eqref{eq::0830}
enable one to carry out analytic continuation of
eigenfunctions of the operator $\opmC$
from the \subdomain{}
$\half\covered{C}^*_{0} $ (equivalently, from the half-plane $\Re z>0$)
to their whole domain $\covered{C}^* $.

At the same time, \Eqs{~}\eqref{eq::0860}, \eqref{eq::0870} may be regarded as yet
another set  of 
constraints which the  eigenfunctions $\EEpm{\pm}  $ always obey.

Proceeding with applications of the above results,
let us now consider the following non-linear first order differential equation
\begin{equation}
  \label{eq::0880}
  \dot\varphi(t)+\sin\varphi(t)=B+A\cos\omega{}t,
\end{equation}
in which $A,B,\omega$ are real constants, $A\not=0,\omega>0$, 
and dot denotes  derivative with respect to the free real variable $t$.
This equation
and its generalizations
appear 
in a number of problems of physics
(most notably, in the modeling of
Josephson junctions 
\cite{MK, St, McC}),
mechanics
\cite{Foo,LT,FLT}, 
dynamical systems theory
\cite{  GI},
geometry
\cite{BLPT}.
On the other hand, Eq.~\eqref{eq::0880}   is closely linked 
to the double confluent Heun equation \cite{T3}. 
More exactly, it is just the family of equations
of the form \eqref{eq::0010} which are equivalent to Eq.~\eqref{eq::0880}. 
The transformation connecting the triplets of their parameters
reads
\begin{equation}
B =\omega l,
\quad
A = 2\omega \mu,\quad,(2\omega)^{-1}=\sqrt{\lambda+\mu^2}.
\end{equation}
The last equation agrees with condition \eqref{eq::0530} 
but, to keep all the parameters real, one has to
impose the restriction $\lambda+\mu^2>0$, more severe than  \eqref{eq::0400}.
Moreover, 
$\lambda$ and $\mu$ are to be assumed to be real themselves as well.

The transition from Eq.~\eqref{eq::0010} to Eq.~\eqref{eq::0880}
is carried out in two steps \cite{T4}.
First, we replace Eq.~\eqref{eq::0880} with the following Riccati
equation 
\begin{equation}
          \label{eq::0900}
{\eIphi}'
+(2\Imi \omega)^{-1}z^{-1}({\eIphi}^2-1)=
\big(lz^{-1} 
+\mu(1+z^{-2})\big){\eIphi}
\end{equation}
for holomorphic function $\eIphi=\eIphi(\coveredz)$.
$\eIphi$
is introduced as
an analytic continuation
of the real analytic function $e^{\Imi\varphi(t)}$ from 
(the lift of)
the unit circle, i.e.\ it 
obeys the equation
\begin{equation}
\label{eq::0910}
\eIphi((\iota^{-1})e^{\Imi\omega t})
=
e^{\Imi\varphi(t)},
\;
t\in(-\pi/\omega,\pi/\omega),
\end{equation}
where $ (\iota^{-1}) $ denote the lift of $\mathbb{C}^*\fgebackslash\mathbb{R}_-  $
to $\covered{C}^*$ such that $(\iota^{-1})1=\coveredone$.
We shall need also the function $\eP=\eP(\coveredz)$ which is a similar 
analytic continuation 
of the exponentiated quadrature 
$\int_0^t\cos\varphi(\tilde t)\,d\tilde t=P(t) $, i.e., obeys
the equation
\begin{equation}
\label{eq::0920}
\eP\big((\iota^{-1})e^{\Imi\omega t}\big)
=e^{P(t)}
=
\exp{   \int_0^t\!\!\cos\varphi(\tilde t)\,d\tilde t  },
\;
t\in(-\pi/\omega,\pi/\omega),
\end{equation}
the corresponding differential equation reading
\begin{equation}
\label{eq::0930}
2\Imi\omega \eP'=z^{-1}( {\eIphi} + {\eIphi}^{-1})\eP.
\end{equation}
Obviously, at the point $\coveredz=\coveredone$, it holds
\begin{equation}
\label{eq::0940}
|\eIphi(\coveredone)|=1,\mbox{ }\eP(\coveredone)=1.
\end{equation}

Eq.\ \eqref{eq::0880} is equivalent to Eq.\ \eqref{eq::0900} 
considered in vicinity of (the lift of) the unit circle with $-1$ removed
for which the initial condition posed at $\coveredz=\coveredone$
obeys the constraint  \eqref{eq::0940}.
This relation
is ensured by explicit
locally invertible transformations of the 
independent variables and unknown functions.

In the second step, the relations between Eq.{~}\eqref{eq::0010} and
Eq.{~}\eqref{eq::0900} are established in the form
transformations taking solutions to one of them to solutions to the
other equation. More exactly, 
as long as one concerns
 Eq.~\eqref{eq::0010}, it proves more convenient to consider
not arbitrary solutions but the eigenfunctions $\EEpm{\pm}$
of the operator $\opmC$.

In this way,
the transformation
taking the functions $\EEpm{\pm}$ to solutions to Eq.~\eqref{eq::0010}
is as follows
\begin{equation}
\label{eq::0950}
\eIphi(\coveredz)=
-\Imi  z^{-\mllll}
\frac{
\cos(\half\alpha) \EEpm{+}(\coveredz)
+\Imi
\sin(\half\alpha) \EEpm{-}(\coveredz)
}{
\cos(\half\alpha) \EEpm{+}(1/\coveredz)
-\Imi
\sin(\half\alpha) \EEpm{-}(1/\coveredz)
}.
\end{equation}
Here $\alpha$ is an arbitrary real constant. Since 
in accordance with  \eqref{eq::0910}
$$
\eIphi(\coveredone)=e^{\Imi\varphi(0)}=
-\Imi{}
\frac{
\cos(\half\alpha) \EEpm{+}(\coveredone)
+\Imi
\sin(\half\alpha) \EEpm{-}(\coveredone)
}{
\cos(\half\alpha) \EEpm{+}(\coveredone)
-\Imi
\sin(\half\alpha) \EEpm{-}(\coveredone)
},
$$
where both $ \EEpm{\pm}(\coveredone)$ are non-zero,
it is obvious that,
varying $\alpha$ through the interval $[0,2\pi)$ (i.e., though a circle/projective line $P(\mathbb R)$),
one is able to obtain all solutions to Eq.\ \eqref{eq::0880} 
(up to uncertainty allowing additions of integer multiples of $2\pi$).
Let us give 
also 
similar relation 
yielding the correspondingly transformed $\eP$:
\begin{eqnarray}
\label{eq::0960}
\begin{aligned}
\eP(\coveredz)=&
\big(
\cos^2(\half\alpha)
\EEpm{+}^2(\coveredone)
+
\sin^2(\half\alpha)
\EEpm{-}^2(\coveredone)
  \big)^{-1}\times
\\&\;
e^{\mu(2-z-1/z)}
\big({
\cos(\half\alpha) \EEpm{+}(\coveredz)
+\Imi
\sin(\half\alpha) \EEpm{-}(\coveredz)
}\big)\times
\\&\;\phantom{e^{\mu(2-z-1/z)}  }
\big({
\cos(\half\alpha) \EEpm{+}(1/\coveredz)
-\Imi
\sin(\half\alpha) \EEpm{-}(1/\coveredz)
}\big).
\end{aligned}
\end{eqnarray}
Notice that one has $\eP(\coveredone)=1$ for it, the first factor
just ensuring such a normalization.
Besides, varying independently the normalizations of
the functions
$\EEpm{\pm}$, i.e.,
for instance, 
the values of
$\EEpm{\pm}(\coveredone)\not=0$, the solution $\eIphi$ and 
the function
$\eP$ vary coherently,
this time 
in a non-trivial way.

The validity of the above transformations is verified by means of 
the plugging their right-hand sides into the corresponding equations,
and elimination of the derivatives $\EEpm{\pm}'$ with the help of 
\Eqs{~}\eqref{eq::0610}.

The inverse transformations,
taking an arbitrary solution (up to a minor exception, see below)
to Eq.\ \eqref{eq::0880}
to the eigenfunctions 
$\EEpm{\pm}$ of the operator $\opmC$,
fulfilling Eq.\ \eqref{eq::0010} as well,
read 
%
\begin{equation}
\label{eq::0970}
\begin{aligned}
\EEpm{\pm}(\coveredz)&=
\half 
e^{\mu(z+1/z-2)/2}
z^{{\mllll}/2}
\left\lgroup
\frac{1\pm\Imi}{\sqrt2}
\eP(\coveredz)^{1/2}\eIphi(\coveredz)^{1/2}
+
\frac{1\mp\Imi}{\sqrt2}
\eP(1/\coveredz)^{1/2}\eIphi(1/\coveredz)^{-1/2}
\right\rgroup.
\end{aligned}
\end{equation}
The fulfillment of 
\Eqs{~}\eqref{eq::0610} is verified by straightforward 
substitution and application of \Eqs{~}\eqref{eq::0900}, \eqref{eq::0930}.

Taking into account Eq.\ \eqref{eq::0910}, one obtains
the equation
\begin{equation}
\label{eq::0980}
\EEpm{\pm}(\coveredone)
=
\mp\eP(\coveredone)\sin \half(\varphi(0)\pm\pi/2),
\end{equation}
manifesting 
the specific normalization of $\EEpm{+}$ and $\EEpm{-}$
innate to the representations \eqref{eq::0970}.
It reveals a peculiarity of the solutions $\varphi(t)$ to Eq.\ \eqref{eq::0010} 
solving one of initial data problems with
$$
\varphi(0)=\pi/2(\hspace{-1em}\mod\pi)=\pm\pi/2(\hspace{-1em}\mod2\pi), \mbox{ implying } \eIphi(\coveredone)=\pm\Imi.
$$
For such a $\eIphi$ the pair of formulas \eqref{eq::0970} yields only one
non-trivial eigenfunction because instead of 
the second one the identically zero function is produced. 
It also means that in such cases
for one of two choices 
of the sings
the expression brackets in \eqref{eq::0970} vanishes, i.e.\ it holds
$$
\eP(\coveredz)\eIphi(\coveredz)
=
\eP(1/\coveredz)/\eIphi(1/\coveredz),
$$

The close relation of Eq.\ \eqref{eq::0010} and Eq.\ \eqref{eq::0900}
outlined above
assumes allusion to 
possible manifestation of 
symmetries the space of solutions to the former possesses, if any,
in properties of the set of solutions to the latter.
Specifically, we have noted above a number of maps 
sending solutions to Eq.\ \eqref{eq::0010} to solutions
of the same equation. One may expect that there exist 
associated maps acting this time to solutions to \eqref{eq::0900}
with similar composition properties.
Below we analyze such an opportunity in more details
for operators given above in explicit form.

To begin,
let us note first that the numerator of the 
fraction in the right-hand side of \eqref{eq::0950}
is a solution to Eq.\ \eqref{eq::0010}.
The denominator is also a solution but with
the argument inversed. 
If one applies
a linear operator taking them to other solutions
then the latter 
can also be used as numerators and
denominators in the formula
\eqref{eq::0950} and such a transformation has finally effect of
certain modification of the parameter $\alpha$. 
Replacing finally therein the functions $\EEpm{\pm}$
by their representations
in terms of $\eIphi$ and $\eP$ in accordance with 
Eq.\ \eqref{eq::0970}, the closed transformation 
acting to solutions to \eqref{eq::0900}
and referring to nothing else
should result.

A toy example is provided by consideration of the operator $\opmC$.
Specifically, let us consider \eqref{eq::0950} with $\alpha=\pi/4$
and plug in it the expressions of $\EEpm{\pm}$ in terms of $\eIphi$
and $\eP$ given in \Eqs{~}\eqref{eq::0970}. The result is the original
solution $\eIphi(\coveredz)$. However, if we apply on a course of transformation
the  operator $\opmC$ to its eigenfunctions $\EEpm{\pm}$, i.e.\ keep
$\EEpm{+}$ unchanged but replace $\EEpm{-}$ with $-\EEpm{-}$, then the result becomes 
equal to 
$ -\eIphi(1/\coveredz)^{-1}$. This function verify Eq.\ \eqref{eq::0900}
as long as $\eIphi(\coveredz)$ does. The transformation
\begin{equation}
\label{eq::0990}
\eIphi(\coveredz)\to -\eIphi(1/\coveredz)^{-1}
\end{equation}
is therefore the map of the set of solutions to  Eq.\ \eqref{eq::0900}
onto itself. In accordance with way of its derivation, it is also the 
representation of the operator $\opmC$ on the latter set. 
It is also worth noting that, when restricted to the (lift of) unit circle,
Eq. \eqref{eq::0990} converts to a more or less evident transformation
\begin{equation}
\label{eq::1000}
\varphi(t)\to\pi-\varphi(-t)
\end{equation}
leaving Eq.\ \eqref{eq::0880} invariant.

Proceeding next with the less trivial \monodromy{} transformation,
it is useful
 to represent 
in advance
its matrix  \eqref{eq::0810}
(which serves for transformation of  solutions to Eq.~\eqref{eq::0010}, not to Eq.~\eqref{eq::0900})
in terms solutions to Eq.~\eqref{eq::0900}. 
In such a representation it is equal to
\begin{eqnarray} 
\label{eq::1020}
&\begin{aligned} 
\mathbf{M}=&
{(-1)^\mllll}{ e^{-P(0)}\sec\varphi(0) }
\begin{pmatrix}
\beta_+ 
&
\Imi(\beta_-+\gamma) %
\\
\Imi(-\beta_-+\gamma)
&
\beta_+
\end{pmatrix},
\end{aligned}
\\[-3ex]
\nonumber
&\begin{aligned}
\\\mbox{where}\;
\beta_{\pm}=&\:\half\big(e^{P(T/2)}\cos\varphi(\half T)\pm e^{P(-T/2)}\cos\varphi(-\half T)\big),
\mbox{ where }T=2\pi/\omega,
\\
\gamma=&\:e^{P(T/2)/2+P(-T/2)/2}\sin\big(\half \varphi(\half T)-\half\varphi(-\half T)\big).
\end{aligned}
\end{eqnarray}
Utilizing this result, 
one can obtain the following explicit representation 
of the coupled \monodromy{} transformations of solutions to
\Eqs{~}\eqref{eq::0900} and \eqref{eq::0930}. They read
\begin{equation}
\label{eq::1030}
\begin{aligned}
\Mono\eIphi(\coveredz)=&\;
\Bigl(\hphantom{\pm}
   e^{P(T/2)/2}{}\cos\bigl(\varphi(\half{}T)\bigr)
{\cdot}{\eP(\coveredz)}^{1/2}  {}{\eIphi(\coveredz)}^{1/2}
\\&
+\Imi{\,}e^{P(-T/2)/2}{}\sin\bigl((\varphi(\half{}T)-\varphi(-\half{}T))/2\bigr)
{\cdot}{\eP(1/\coveredz)}^{1/2} {\eIphi(1/\coveredz)}^{-1/2}
\Bigr)\times
{}
\\&
\Bigl(\hphantom{\pm}
   e^{P(T/2)/2}{}\cos\bigl(\varphi(\half{}T)\bigr)
{}{\eP(\coveredz)}^{1/2}   {\eIphi(\coveredz)}^{-1/2}
\\&
   -\Imi{\,}e^{P(-T/2)/2}{}\sin\bigl((\varphi(\half{}T)-\varphi(-\half{}T))/2\bigr)
{}{\eP(1/\coveredz)}^{1/2}{}{\eIphi(1/\coveredz)}^{1/2}
\Bigr)^{-1},
\end{aligned}
\end{equation}
\begin{equation}
\label{eq::1040}
\begin{aligned}
\Mono\eP(\coveredz)=&\;
  2{}e^{-P(-T/2)}{}\cos\bigl(\varphi(0)\bigr)
 \sec^2\bigl((\varphi(\half{}T) + \varphi(-\half{}T))/2\bigr)\times
  \\&
   {}\Bigl(\hphantom{\pm}
     e^{P(T/2)}{}\cos^2\bigl(\varphi(\half{}T)\bigr)
        {\cdot}{\eP(\coveredz)}^{1/2}{\eP(1/\coveredz)}^{-1/2}+
\\&
        + e^{P(-T/2)}{}\sin^2\bigl((\varphi(\half{}T) - \varphi(-\half{}T))/2\bigr)\
        {\cdot}{\eP(1/\coveredz)}^{1/2}{\eP(\coveredz)}^{-1/2}
        \\&
    -\Imi{\,}\cos\bigl(\varphi(T/2)\bigr)
       {}e^{(P(T/2) + P(-T/2))/2}
        {}\sin\bigl((\varphi(\half{}T) - \varphi(-\half{}T))/2\bigr)\cdot
        \\&\hphantom{ -\Imi{\,}\cos\bigl(\varphi(T/2)\bigr) }
        {\cdot}(
          {\eIphi(1/\coveredz)}^{1/2}{}{\eIphi(\coveredz)}^{1/2} 
        - {\eIphi(1/\coveredz)}^{-1/2}{}{\eIphi(\coveredz)}^{-1/2}
          )
     \Bigr)\times
     \\&
     \bigl(
           {\eIphi(\coveredz)}^{1/2}{}{\eIphi(1/\coveredz)}^{1/2}
            + {\eIphi(\coveredz)}^{-1/2}{}{\eIphi(1/\coveredz)}^{-1/2}  
     \bigr)^{-1},
\end{aligned}
\end{equation}
Their restriction to the lift of the unit circle
leads to the 
transformations of solutions 
to Eq.~\eqref{eq::0880}
\begin{equation}
\label{eq::1050}
 \begin{aligned}
e^{\Imi\varphi(t \pm T)}=
&\big(\hphantom{\pm\Imi}
 e^{ P(\pm T/2)/2}\cos(\varphi(\pm \half{}T)/2)
\cdot e^{P(t)/2}e^{\Imi\varphi(t)/2}
\\&
\pm\Imi
e^{P(\mp \half{}T)/2}\sin((\varphi(\half{}T)-\varphi(-\half{}T))/2)
\cdot e^{P(-t)/2}e^{-\Imi\varphi(-t)/2}
\big)\times
\\
&\big(\hphantom{\pm\Imi}
 e^{ P(\pm \half{}T)/2}\cos(\varphi(\pm \half{}T)/2)
\cdot e^{P(t)/2}e^{-\Imi\varphi(t)/2}
\\&
\mp\Imi
e^{ P(\mp T/2)/2}\sin((\varphi(\half{}T)-\varphi(-\half{}T))/2)
\cdot e^{P(-t)/2}e^{\Imi\varphi(-t)/2}
\big)^{-1},
\end{aligned}
\end{equation}
 \begin{equation}
\label{eq::1060}
 \begin{aligned}
e^{P(t \pm T)}=
&\cos(\varphi(0))\,
e^{-P(\mp T/2)}
\sec^2\!\big(( \varphi(\half{}T)+\varphi(-\half{}T))/2\big)
\times
\\&
\big\{
\hspace{0.81em}
e^{P(\pm T/2)}
\cos^2\!\big(\varphi(\pm\half{}T)/2\big)
\cdot e^{(P(t)-P(-t))/2}
\\&
+
e^{P(\mp T/2)}\sin^2\!\big(( \varphi(\half{}T)-\varphi(-\half{}T))/2\big)
\cdot e^{(P(-t)-P(t))/2}
\\&
\pm
2
e^{(P(T/2)+P(-T/2))/2}
\sin\big(( \varphi(\half{}T)-\varphi(-\half{}T))/2\big)
\cos\big(\varphi(\pm\half{}T)/2\big)
\cdot
\\&
\hspace{2em}\sin\big(( \varphi(t)+\varphi(-t))/2\big)
\big\}
\cdot
 \sec\big(( \varphi(t)+\varphi(-t))/2\big)
\end{aligned}
\end{equation}
which extend them 
from the segment of variation
 of $t$ of the length equal to the period $T$ of the right-hand side of
Eq.~\eqref{eq::0880}
to the both adjacent segments of the same length.

It is worth noting that the usual interpretation of \monodromy{}
refers to the behavior of solutions of a differential
equation along loops encircling the chosen singular point of
this equation (if there are several ones). These loops
can be arbitrarily small and even may contract to this point,
the \monodromy{} being insensitive to their continuous deformations.
In particular, such approach applies in case of the linear equation \eqref{eq::0010}
for which the only singular point (in $\mathbb C$) is zero.
Its solutions are regular everywhere (with a natural exception) and the loop
can be
arbitrarily small as well as
arbitrarily large.
However,
in case of the functions $\eIphi$ and $\eP$
of which the former
obeys the non-linear equation \eqref{eq::0900},
the \monodromy{} described by the formulas \eqref{eq::1030}, \eqref{eq::1040}
is determined, basically, along the loop coinciding with the unit circle.
It also can be deformed but not contracted to zero since the unit
disk always contain the singular points of the function $\eIphi $. In accordance
with \eqref{eq::0950} they are
associated
with
roots of
certain linear combinations of the functions $\EEpm{\pm}$. Since the latter verify
a linear homogeneous ODE, these roots are simple and hence all the singularities of  $\eIphi$,
except at zero,
are simple poles.

\Eqs{~}\eqref{eq::1040} and \eqref{eq::1050}
do not involve the parameter $\mllll=-l$ which had been restricted here to positive integers.
Hence it can be supposed that such a limitation is superfluous for their validity.
It is indeed the case and a straightforward computation shows
that the above  transformations  hold true as long as $\varphi$ is a solution 
to Eq.~\eqref{eq::0880} irrespectively to the values of the constant parameters.

On the contrary, the transformation of solutions to Eq.~\eqref{eq::0900}
generated by the operators $\opmA$ and $\opmB$
can be constructed only if $\mllll\in\mathbb{N}$.
They can be represented in the following way.

\newcommand{\AAACCC}{u}
\newcommand{\BBBDDD}{v}
\newcommand{\KKKLLL}{w}

\newcommand{\AAA}{\AAACCC_{\!+}}
\newcommand{\CCC}{\AAACCC_{\!-}}
\newcommand{\BBB}{\BBBDDD_{\!+}}
\newcommand{\DDD}{\BBBDDD_{\!-}}
\newcommand{\KKK}{\KKKLLL_{\!+}}
\newcommand{\LLL}{\KKKLLL_{\!-}}

\begin{eqnarray}          \label{eq::1070}&&
\begin{aligned}
\eIphi_A(\coveredz)=
\frac{
V \eP^{1/2}(\coveredz)\eIphi^{1/2}(\coveredz) - \Imi\, U \eP^{1/2}(1/\coveredz)\eIphi^{-1/2}(1/\coveredz)
}{
V \eP^{1/2}(\coveredz)\eIphi^{-1/2}(\coveredz) + \Imi\, U \eP^{1/2}(1/\coveredz)\eIphi^{1/2}(1/\coveredz)
},
\end{aligned}
 \\                        \label{eq::1080}&&
\begin{aligned}
\eP_A(\coveredz)=&\;
e^{-P(0)}(U^2+V^2-2U V\sin\varphi(0))^{-1}\times
\\&\hphantom{\;e^{-P(0)}   }
(
V \eP^{1/2}(\coveredz)\eIphi^{1/2}(\coveredz) - \Imi\, U \eP^{1/2}(1/\coveredz)\eIphi^{-1/2}(1/\coveredz)
)\times
\\&\hphantom{\;e^{-P(0)}   }
(
V \eP^{1/2}(\coveredz)\eIphi^{-1/2}(\coveredz) + \Imi\, U \eP^{1/2}(1/\coveredz)\eIphi^{1/2}(1/\coveredz)
),
\end{aligned}
\\                          \label{eq::1090}&&
\begin{aligned}
\eIphi_B(\coveredz)=
\frac{
U \eP^{1/2}(\coveredz)\eIphi^{1/2}(\coveredz) - \Imi\, V \eP^{1/2}(1/\coveredz)\eIphi^{-1/2}(1/\coveredz)
}{
U \eP^{1/2}(\coveredz)\eIphi^{-1/2}(\coveredz) + \Imi\, V \eP^{1/2}(1/\coveredz)\eIphi^{1/2}(1/\coveredz)
},
\end{aligned}
\\                          \label{eq::1100}&&
\begin{aligned}
\eP_B(\coveredz)=&\;
e^{-P(0)}(U^2+V^2-2U V\sin\varphi(0))^{-1}\times
\\&\hphantom{\;e^{-P(0)}   }
(
U \eP^{1/2}(\coveredz)\eIphi^{1/2}(\coveredz) - \Imi\, V \eP^{1/2}(1/\coveredz)\eIphi^{-1/2}(1/\coveredz)
)\times
\\&\hphantom{\;e^{-P(0)}   }
(
U \eP^{1/2}(\coveredz)\eIphi^{-1/2}(\coveredz) + \Imi\, V \eP^{1/2}(1/\coveredz)\eIphi^{1/2}(1/\coveredz)
),
\end{aligned}
\\[-2em]\nonumber
\end{eqnarray}
\begin{eqnarray}
\mbox{where }
\AAACCC_{\pm} &= 
&
 e^{\Imi\varphi(T/2)/2}
 \pm(-1)^{\mllll}
 \Imi\, e^{-\Imi\varphi(T/2)/2},
 \nonumber
\\
 \BBBDDD_{\pm} & =
 &
 (-1)^{\mllll}
 e^{\Imi\varphi(-T/2)/2}
 \pm
 \Imi\,e^{-\Imi\varphi(-T/2)/2},
 \nonumber
\\
 \KKKLLL_{\pm}&=
 &
  \cos(\half\varphi(0)) \pm \sin(\half\varphi(0)),
\\
U&=&
2e^{P(T/2)/2}
\big(
\delTa_{+}\AAACCC_{+} \KKKLLL_{+}
-\Imi\,\delTa_{-}\AAACCC_{-} \KKKLLL_{-}
\big),
\nonumber
\\
V&=&\hphantom{ +\Imi\, }
\delTa_{+}\KKKLLL_{+}
\big(e^{P(T/2)/2}\AAACCC_{+}
+
e^{P(-T/2)/2}\KKKLLL_{+}
\big)
\nonumber
\\&&
+\Imi\,
\delTa_{-}\KKKLLL_{-}
\big(e^{P(T/2)/2}\AAACCC_{-}
+
e^{P(-T/2)/2}\KKKLLL_{-}
\big).
\nonumber
\end{eqnarray}
In a sense, these transformations inherit the composition properties of 
their prototypes. The corresponding 
results can be resumed as follows

\begin{theorem}
Let a solution $\eIphi $
to Eq.{~}\eqref{eq::0900}
holomorphic
in
some vicinity of the lift of
$\puncturedS_{(-1)}$ be given and 
$\eP=\eP(z) $ be a  solution to Eq.{~}\eqref{eq::0930},
the conditions \eqref{eq::0940} being fulfilled.
Then
\begin{itemize}
\item
\Eqs{~}\eqref{eq::1030}, \eqref{eq::1040},
in which $\varphi(0), \varphi(\pm\half T), P(0), P(\pm\half T) $
are determined from \Eqs{~}\eqref{eq::0910} and \eqref{eq::0920}, 
represent the \monodromy{} transformations of
projections of 
$\eIphi$
and $\eP$ for the paths homotopic to $S^1$.
\item
The transformations \eqref{eq::1070}-\eqref{eq::1100} take these solutions 
to solutions of the same equations. Moreover,
when applied twice, 
the maps
\eqref{eq::1070}, \eqref{eq::1080} reproduce the original $\eIphi,\eP  $
while two-fold maps
\eqref{eq::1090}, \eqref{eq::1100}
become the above \monodromy{} map.
\end{itemize}
\end{theorem}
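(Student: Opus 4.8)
The plan is to reduce every claim to the linear picture on $\mathbf\Omega$ through the dictionary \eqref{eq::0950}, \eqref{eq::0960}, \eqref{eq::0970}. For the value of $\alpha$ fixed in the $\opmC$ toy example, \eqref{eq::0950} and \eqref{eq::0960} express $\eIphi,\eP$ through a single ``numerator solution'' $F=\cos(\half\alpha)\EEpm{+}+\Imi\sin(\half\alpha)\EEpm{-}\in\mathbf\Omega$, as $\eIphi(\coveredz)=-\Imi z^{-\mllll}F(\coveredz)\big/\big((\opmC F)(1/\coveredz)\big)$ and $\eP(\coveredz)=e^{\mu(2-z-1/z)}F(\coveredz)(\opmC F)(1/\coveredz)\big/\big(F(\coveredone)(\opmC F)(\coveredone)\big)$, the operator $\opmC$ appearing because it is $\diag(+1,-1)$ in the basis $\{\EEpm{+},\EEpm{-}\}$. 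These formulas are insensitive to scalar multiples of $F$; for any $N\in\mathbf\Omega$ (not just the numerator solution) the same two formulas with $F$ replaced by $N$ give solutions of \eqref{eq::0900}, \eqref{eq::0930} (this is \eqref{eq::0950}, \eqref{eq::0960} on the relevant one-parameter family, and extends to all $N$, as may also be seen by direct substitution and elimination of derivatives via \eqref{eq::0010}); and \eqref{eq::0970} recovers $F$ from $(\eIphi,\eP)$. Consequently, any linear $\mathcal O$ that preserves $\mathbf\Omega$ induces the transformation $(\eIphi,\eP)\mapsto(\eIphi_{\mathcal O},\eP_{\mathcal O})$ obtained by replacing $F$ with $\mathcal OF$ in the two formulas; this assignment is well defined, scalar-insensitive, sends solutions of \eqref{eq::0900}, \eqref{eq::0930} to solutions of the same equations, and respects composition of operators.

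For the monodromy I would first rewrite the matrix $\mathbf{M}$ of \eqref{eq::0810} in $\eIphi,\eP$-terms: its entries at $\coveredone$ come from \eqref{eq::0980}, and its entries at the lifts $(1,\pm\pi)$ of $-1$ from \eqref{eq::0970}, which by \eqref{eq::0910}, \eqref{eq::0920} carry $\varphi(\pm\half T),P(\pm\half T)$ --- this is \eqref{eq::1020}. Then, taking $\mathcal O=\Mono$ (acting by $\mathbf{M}$ on the column $(\EEpm{+},\EEpm{-})^{\mathrm{T}}$ as in \eqref{eq::0830}, and noting $1/(\mathrm{M}\coveredz)=\mathrm{M}^{-1}(1/\coveredz)$, so that $\Mono^{-1}$ and \eqref{eq::0820} enter the denominator), substituting \eqref{eq::0970} for $F$ and simplifying --- eliminating $\EEpm{\pm}'$ via \eqref{eq::0610}, and the derivatives $\ppp',\rrr'$ and the inverse-argument values of $\ppp,\qqq,\rrr,\sss$ via \eqref{eq::0180}, \eqref{eq::0360} --- collapses the outcome to \eqref{eq::1030} and \eqref{eq::1040}. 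Restricting $\coveredz$ to the lift of $S^1$, where $\mathrm{M}^{\pm1}\coveredz$ corresponds to $t\pm T$, specialises these to \eqref{eq::1050}, \eqref{eq::1060}; since monodromy is invariant under homotopy of the loop, this settles the first bullet.

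For $\opmA,\opmB$ the same computation runs with the matrices $\matrA$ of \eqref{eq::0780} and $\matrB$ of \eqref{eq::0790}, whose entries are turned into $\varphi(\pm\half T),\varphi(0),P(\pm\half T)$-data by \eqref{eq::0980}, \eqref{eq::0970}. Since $\opmC\circ\opmA=\opmB$ and $\opmC\circ\opmB=\opmA$ hold under \eqref{eq::0530} (from \eqref{eq::0260}, \eqref{eq::0250}), the transformed solutions read $\eIphi_A(\coveredz)=-\Imi z^{-\mllll}(\opmA F)(\coveredz)\big/\big((\opmB F)(1/\coveredz)\big)$ and $\eIphi_B(\coveredz)=-\Imi z^{-\mllll}(\opmB F)(\coveredz)\big/\big((\opmA F)(1/\coveredz)\big)$, automatically solutions of \eqref{eq::0900} because $\opmA F,\opmB F\in\mathbf\Omega$ by Theorem \ref{t:010}; collecting coefficients and using the quadratic relations \eqref{eq::0560}, \eqref{eq::0570}, \eqref{eq::0770} among the $\EEpm{\pm}$ gives the closed forms \eqref{eq::1070}--\eqref{eq::1100}, the auxiliaries $\AAACCC_\pm,\BBBDDD_\pm,\KKKLLL_\pm,U,V$ being precisely what the entries of $\matrA,\matrB$ and \eqref{eq::0980} produce. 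The ``twice-applied'' statements are then immediate from composition-compatibility and the composition rules: $\opmA\circ\opmA=-\delTa\cdot\Id$ (eq. \eqref{eq::0190}) sends the numerator solution $F\mapsto-\delTa F$, whose scalar drops out of the formulas, so two applications of \eqref{eq::1070}, \eqref{eq::1080} return $(\eIphi,\eP)$; while $\opmB\circ\opmB=\delTa\cdot\Mono$ (eq. \eqref{eq::0200} under \eqref{eq::0530}) sends $F\mapsto\delTa\,\Mono F$, so, using $\opmC\circ\Mono=\Mono^{-1}\circ\opmC$ (eq. \eqref{eq::0330}) and $1/(\mathrm{M}\coveredz)=\mathrm{M}^{-1}(1/\coveredz)$ once more, two applications of \eqref{eq::1090}, \eqref{eq::1100} reproduce the monodromy transformation \eqref{eq::1030}, \eqref{eq::1040} (equivalently \eqref{eq::1050}, \eqref{eq::1060}); this is the second bullet.

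The main obstacle is the middle step of each case --- substituting \eqref{eq::0970} into the matrix-transformed numerator and denominator and collapsing the result into the compact identities \eqref{eq::1030}, \eqref{eq::1040} and \eqref{eq::1070}--\eqref{eq::1100}. It requires disciplined bookkeeping of the chosen lifts (the inversion $\coveredz\mapsto1/\coveredz$ of \eqref{eq::0160}, the two boundary points $\coveredrevco\!\coveredz,\coveredrevpro\!\coveredz$, and the fractional powers $z^{\mllll/2},\eIphi^{\pm1/2},\eP^{1/2}$), systematic use of \eqref{eq::0180}, \eqref{eq::0360} for $\ppp,\qqq,\rrr,\sss$, and of \eqref{eq::0560}, \eqref{eq::0570}, \eqref{eq::0770} to force the cancellations that leave a decomposition with constant coefficients in the basis $\{\EEpm{+},\EEpm{-}\}$; one must also keep $\eP(\coveredone)=1$ under the normalisation prefactors, and note that on the exceptional locus $\varphi(0)=\pm\pi/2$, where \eqref{eq::0970} yields only one non-trivial eigenfunction, $F$ is still non-zero so the construction and all the transformations persist by continuity.
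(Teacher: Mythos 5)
Your proposal is correct and follows essentially the same route as the paper: the theorem there is a summary of the derivation carried out in the preceding text, namely transporting the linear action of $\Mono$, $\opmA$, $\opmB$ on $\mathbf\Omega$ through the dictionary \eqref{eq::0950}--\eqref{eq::0970}, rewriting the matrix \eqref{eq::0810} as \eqref{eq::1020}, and reading off the involutivity and square-root-of-monodromy properties from the composition rules \eqref{eq::0190}, \eqref{eq::0200}, \eqref{eq::0250}, \eqref{eq::0260}, \eqref{eq::0330} under the normalization \eqref{eq::0530}. Your reformulation of \eqref{eq::0950}, \eqref{eq::0960} in terms of a single numerator solution $F$ and $\opmC F$, with scalar-insensitivity explaining why the factors $-\delTa$ and $\delTa$ drop out, is a clean packaging of exactly the argument the paper intends.
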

Thus the transformation 
represented by \Eqs{~}%
\eqref{eq::1070}, \eqref{eq::1080} is \involutive{}
and transformation 
represented by \Eqs{~}%
\eqref{eq::1090}, \eqref{eq::1100}
can be called a square root of the \monodromy{} transformation.

The final remark concerning  implications of symmetries of the space of solutions to 
Eq.{~}\eqref{eq::0010} for solutions to Eq.{~}\eqref{eq::0880} is as follows.
The \monodromy{} transformation
in the form
\eqref{eq::0810}
enabled us to realize  the algebraic extension of solutions to 
Eq.{~}\eqref{eq::0010}
from the \subdomain{} with closure of projection coinciding with the complex plane
to their whole domains.
In terms of solutions to Eq.{~}\eqref{eq::0880} such extension looks like 
their continuation from the segment $(-T/2,T/2)$ of length $T=2\pi/\omega$
to the whole real axis. 
There are also transformations Eq.{~}\eqref{eq::0840}-\eqref{eq::0870}
which result in possibility of `dissemination' of solutions to Eq.{~}\eqref{eq::0010}
from the half-plane $\Re z>0 $.
When restricted to the unit circle, this relation means possibility
of algebraic extending of solutions to Eq.{~}\eqref{eq::0880}
from the segment $(-T/4,T/4)$ of length $T/2$.

\bigskip
\noindent{\bf Appendix}\medskip

\noindent
When considered 
on simply connected subsets of $\mathbb{C}^*$,
Eq.{~}\eqref{eq::0010} determines solutions
which are single-valued holomorphic functions.
In particular, this holds
for projections (in fact, restrictions)
$\iota_*\EEpm{+}(z), \iota_*\EEpm{-}(z)$
of the eigenfunctions $\EEpm{+}(\coveredz), \EEpm{-}(\coveredz)$ of the operator $\opmC $
to
$\mathbb{C}^{-*}= \mathbb{C}^{*} \fgebackslash\mathbb{R}_-$,
the
complex plane without zero
with negative real axis $\mathbb{R}_-$ removed.
Their one-side limits $\iota_*\EEpm{\pm}(x+0\cdot\Imi)$ and
$\iota_*\EEpm{\pm}(x-0\cdot\Imi)$
at points $x$ of $\mathbb{R}_-$ 
 approached from above ($x+0\cdot\Imi$)
and from below ($x-0\cdot\Imi$), respectively,
do not coincide but are connected by the linear transformation
determined by the matrix \eqref{eq::0810} (by definition, the monodromy transformation)
{\em which does not depend on $x$.}
Notice also that if
\begin{equation}
\label{eq::1120}
\EEpm{+}(\coveredrevco\!\coveredone)^2-\EEpm{+}(\coveredrevpro\!\coveredone)^2
\not=0\not=
\EEpm{-}(\coveredrevco\!\coveredone)^2-\EEpm{-}(\coveredrevpro\!\coveredone)^2
\end{equation}
then the non-diagonal   elements of \eqref{eq::0810} are non-zero and
each function
$\EEpm{+}$ or
$\EEpm{-}$, as well as their projections $\iota_*\EEpm{\pm}$,
is taken 
to
a function linearly independent of it.
At the same time there exist solutions to Eq.{~}\eqref{eq::0010} for which
the same transformation does not engender a distinct solution but
reduces
to multiplication of the argument by some constant alone.
Indeed, let us define the two particular linear  combinations
$E_{[+]}^{(M)}$ and $E_{[-]}^{(M)}$
of 
$\EEpm{+}$ and 
$\EEpm{-}$ as follows:
\begin{equation}
\label{eq::1130}
E_{[\pm]}^{(M)}
=
\big(\Imi(\EEpm{-}(\coveredrevco\!\coveredone)^2-\EEpm{-}(\coveredrevpro\!\coveredone)^2)\big)^{1/2}
\EEpm{+}
\pm
\big(\Imi(\EEpm{+}(\coveredrevco\!\coveredone)^2-\EEpm{+}(\coveredrevpro\!\coveredone)^2)\big)^{1/2}
\EEpm{-}.
\end{equation}
A straightforward computation using \Eqs{~}\eqref{eq::0830},\eqref{eq::0810},
which describe the monodromy transformation $\Mono$ of $\EEpm{\pm}$ in explicit form,
shows that
\begin{equation}
\label{eq::1140}
\Mono E_{[\pm]}^{(M)}=
\Lambda_{\pm}E_{[\pm]}^{(M)},
\end{equation}
where the complex (or, perhaps, real) numbers
\begin{equation}
\begin{aligned}
\label{eq::1150}
\Lambda_{\pm}=&e^{4\mu}\big(2\EEpm{+}(\coveredone)\EEpm{-}(\coveredone)\big)^{-1/2}
\Bigl(\EEpm{+}(\coveredrevco\!\coveredone)\EEpm{-}(\coveredrevco\!\coveredone)
+
\EEpm{+}(\coveredrevpro\!\coveredone)\EEpm{-}(\coveredrevpro\!\coveredone)
\\&
\pm
\big(
(\EEpm{+}(\coveredrevco\!\coveredone)^2-\EEpm{+}(\coveredrevpro\!\coveredone)^2)
(\EEpm{-}(\coveredrevco\!\coveredone)^2-\EEpm{-}(\coveredrevpro\!\coveredone)^2)
)^{1/2}
\Bigr)
\end{aligned}
\end{equation}
are the eigenvalues of the matrix \eqref{eq::0810} and, at the same time, the eigenvalues
of the monodromy operator itself. It is worth noting that
\begin{equation}
\label{eq::1160}
\Lambda_{+}\cdot\Lambda_{-}=1
\end{equation}
(this follows from definitions \eqref{eq::1150} and  Eq.~\eqref{eq::0570}; yet
another way of proving leans on the first equation among \Eqs{~}\eqref{eq::0820}).
The inequalities \eqref{eq::1120} are precisely the condition of non-coincidence of these
eigenvalues. If it is fulfilled, the functions $E_{[\pm]}^{(M)}$
are linear independent and
constitute the basis of the space
of solutions to 
Eq.{~}\eqref{eq::0010}.

Eq.{~}\eqref{eq::1140} states in particular that
on the edges  of $\mathbb{C}^{-*}$
contacting
$\mathbb{R}_-$ the projections of $E_{[\pm]}^{(M)}$
obey the constraints 
\begin{equation}
\label{eq::1170}
\iota_*E_{[\pm]}^{(M)} (x+0\cdot\Imi)=
\Lambda_{\pm}\cdot\iota_*E_{[\pm]}^{(M)} (x-0\cdot\Imi),\; x\in \mathbb{R}_-.
\end{equation}
The power functions $z^{(2\Imi\pi)^{-1}{\mathrm{Log}}\Lambda_{\pm}}   $
possess exactly the same property.
This means that for each choice of the signs
the one-side limiting values
of the product $z^{\Imi(2\pi)^{-1}{\mathrm{Log}}\Lambda_{\pm}}\cdot\iota_*E_{[\pm]}^{(M)} (z) $
on the edges of $\mathbb{C}^{-*}$
{\em coincide\/}.
Accordingly,
they provide its  extension  to $ \mathbb{R}_- $
yielding 
a  single-valued continuous function $G_{\{\pm\}}(z)$.
In view of
existence and
uniqueness of
analytic continuations across  $ \mathbb{R}_- $ in both directions, this function is
holomorphic everywhere in $\mathbb{C}^{*}$. We have established, therefore, the following result.
                       \begin{theorem} 
Let $\EEpm{+}(\coveredz)$ and $\EEpm{-}(\coveredz)$ be solutions to Eq.{~}\eqref{eq::0010}
such that $\EEpm{\pm}(\coveredone)\not=0$,
and
the derivatives
$\EEpm{\pm}'(\coveredone)$ obey \Eqs{~}\eqref{eq::0510}.
Let the constant parameters of Eq.{~}\eqref{eq::0010} be such that
the conditions \eqref{eq::1120} are fulfilled.
Let us define the constants
 $\gamma_\pm=\Imi(2\pi)^{-1}{\mathrm{Log}}\Lambda_{\pm}$,
where the numbers $ \Lambda_{\pm} $ defined by the formulas \eqref{eq::1150}
are non-zero in view of Eq.~\eqref{eq::1160} which also implies that $  \gamma_++\gamma_-=0$.
Then the
functions $ \coveredz^{\gamma_\pm}E_{[\pm]}^{(M)} (\coveredz) $,
where
$ \coveredz^{\gamma_\pm} $ are the lifts
 of the power functions
$z^{\gamma_\pm}$ such that $ \coveredone^{\gamma_\pm}=1 $,
coincide with 
the lifts (pullbacks) to $ \covered{C}^* $ of some functions $ G_{\pm}(z) $ holomorphic on
$\mathbb{C}^*$, i.e.\
\begin{equation}
\label{eq::1180}
\coveredz^{\gamma_\pm}E_{[\pm]}^{(M)} (\coveredz)
=G_{\pm}(\iota\, \coveredz).
\end{equation}
                      \end{theorem}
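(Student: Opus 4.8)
The plan is to assemble into one chain the observations made in the run‑up to the statement. The mechanism is the following: the combinations $E_{[\pm]}^{(M)}$ of \eqref{eq::1130} are eigenfunctions of the monodromy operator with eigenvalues $\Lambda_\pm$, so multiplying each by a scalar power function $\coveredz^{\gamma_\pm}$ whose monodromy factor is the \emph{reciprocal} $\Lambda_\pm^{-1}$ removes the multivaluedness completely; the resulting single‑valued function, read off on $\mathbb{C}^{*}$, is then holomorphic by a removable‑singularity argument across the cut $\mathbb{R}_-$, and pulls back to the left‑hand side of \eqref{eq::1180}.

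First I would settle the eigenvalues and the constants $\gamma_\pm$. By \eqref{eq::0830} the monodromy acts on the column $(\EEpm{+},\EEpm{-})^{\mathrm T}$ as left multiplication by the matrix $\mathbf{M}$ of \eqref{eq::0810}; its characteristic polynomial is elementary, and \eqref{eq::0570} gives $\det\mathbf{M}=1$ (in agreement with \eqref{eq::0820}), so its roots are exactly the numbers $\Lambda_\pm$ of \eqref{eq::1150}, with $\Lambda_+\Lambda_-=1$. Hence $\Lambda_\pm\neq0$, the logarithms entering $\gamma_\pm$ are meaningful, and $\gamma_++\gamma_-=0$ for the principal branch (the sole ambiguity, $\Lambda_\pm\in\mathbb{R}_-$, concerns a lower‑dimensional exceptional subset of the parameters and can be excluded or handled by continuity). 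Under the genericity assumption \eqref{eq::1120} the antidiagonal entries of $\mathbf{M}$ are nonzero and $\Lambda_+\neq\Lambda_-$, so the vectors \eqref{eq::1130} are precisely the nonzero eigenvectors of $\mathbf{M}$; this is \eqref{eq::1140}, $\Mono E_{[\pm]}^{(M)}=\Lambda_\pm E_{[\pm]}^{(M)}$. I would also note in passing that \eqref{eq::1120} is exactly what makes the square roots in \eqref{eq::1130} well defined and the two combinations linearly independent, although the latter is not needed below.

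Next I would glue across $\mathbb{R}_-$. On the simply connected set $\mathbb{C}^{-*}=\mathbb{C}^{*}\fgebackslash\mathbb{R}_-$ a solution of \eqref{eq::0010} projects to a single‑valued holomorphic function whose two one‑sided limits along $\mathbb{R}_-$ are related by the monodromy, i.e.\ by $\mathbf{M}$; for $E_{[\pm]}^{(M)}$ this degenerates to the scalar relation \eqref{eq::1170}, $\iota_*E_{[\pm]}^{(M)}(x+0\cdot\Imi)=\Lambda_\pm\,\iota_*E_{[\pm]}^{(M)}(x-0\cdot\Imi)$ for $x\in\mathbb{R}_-$. Taking the lift $\coveredz^{\gamma_\pm}$ normalized by $\coveredone^{\gamma_\pm}=1$, its projection to $\mathbb{C}^{-*}$ is the principal branch of $z^{\gamma_\pm}$, and since $\gamma_\pm=\Imi(2\pi)^{-1}\mathrm{Log}\,\Lambda_\pm$ gives $e^{2\pi\Imi\gamma_\pm}=\Lambda_\pm^{-1}$, its one‑sided limits along $\mathbb{R}_-$ differ by the reciprocal factor. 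Consequently $z^{\gamma_\pm}\iota_*E_{[\pm]}^{(M)}(z)$ has coinciding one‑sided limits at every point of $\mathbb{R}_-$; it extends to a function $G_\pm$ on $\mathbb{C}^{*}$ that is continuous everywhere and holomorphic off the real axis, and a standard removability argument (Morera across the analytic arc $\mathbb{R}_-$) then shows $G_\pm$ is holomorphic on all of $\mathbb{C}^{*}$. To finish, I would observe that both sides of \eqref{eq::1180} are holomorphic on all of $\covered{C}^{*}$ --- the left because $E_{[\pm]}^{(M)}$ solves \eqref{eq::0010} and $\coveredz^{\gamma_\pm}$ is single‑valued holomorphic on the universal cover, the right as the pullback of a function holomorphic on $\mathbb{C}^{*}$ --- and they agree on the lift of $\mathbb{C}^{-*}$ by construction, so the identity theorem gives \eqref{eq::1180}.

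The main obstacle will be the first step together with the sign bookkeeping in the gluing: one has to diagonalize $\mathbf{M}$ explicitly, verify that \eqref{eq::1130} actually picks out its eigenvectors, and then make sure the chosen branch of $\coveredz^{\gamma_\pm}$ produces a monodromy factor reciprocal to $\Lambda_\pm$ (so that the product is single‑valued) rather than equal to it (which would double the multivaluedness). Once this is in place, the removability step and the concluding identity‑theorem step are routine.
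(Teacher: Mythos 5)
Your proposal follows essentially the same route as the paper: diagonalize the monodromy matrix $\mathbf{M}$ so that $E_{[\pm]}^{(M)}$ satisfy the scalar jump relation \eqref{eq::1170} across $\mathbb{R}_-$, multiply by the power function $\coveredz^{\gamma_\pm}$ whose jump factor is the reciprocal $\Lambda_\pm^{-1}$, and conclude that the product glues to a single-valued function holomorphic on all of $\mathbb{C}^*$. Your added care about the branch of the logarithm and the explicit Morera/identity-theorem steps only make explicit what the paper leaves implicit; the argument is correct.
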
   
\begin{corollary}
Under conditions assumed by the above theorem,
any solution to Eq.{~}\eqref{eq::0010}
is a linear combination of two products of a power function and a function 
holomorphic on $\mathbb{C}^*$.
\end{corollary}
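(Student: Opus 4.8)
The corollary is meant to follow at once from the theorem just stated, combined with the observation (made immediately before that theorem) that, when \eqref{eq::1120} holds, the functions $E_{[+]}^{(M)}$ and $E_{[-]}^{(M)}$ of \eqref{eq::1130} form a basis of the two-dimensional space $\mathbf\Omega$ of solutions to Eq.~\eqref{eq::0010}. So the plan is simply to make that deduction explicit. First I would recall, via the earlier theorem on the Cauchy problems \eqref{eq::0510}, that for the given parameters there exist eigenfunctions $\EEpm{+},\EEpm{-}$ of $\opmC$ solving Eq.~\eqref{eq::0010} with $\EEpm{\pm}(\coveredone)\neq0$; assuming the genericity condition \eqref{eq::1120}, I would form the combinations $E_{[\pm]}^{(M)}$ of \eqref{eq::1130} (fixing once and for all a branch of each square root occurring there and in $\Lambda_\pm$), and note that they are again a basis of $\mathbf\Omega$, since \eqref{eq::1120} is exactly the condition that the monodromy eigenvalues $\Lambda_\pm$ of \eqref{eq::1150} be distinct, which forces $E_{[+]}^{(M)}$ and $E_{[-]}^{(M)}$ to be linearly independent.

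Next I would rewrite relation \eqref{eq::1180} of the theorem in the form
\[
E_{[\pm]}^{(M)}(\coveredz)=\coveredz^{-\gamma_\pm}\,G_{\pm}(\iota\,\coveredz),
\]
where $\coveredz^{-\gamma_\pm}$ is the lift of the power function $z^{-\gamma_\pm}$ normalized by $\coveredone^{-\gamma_\pm}=1$ and $G_\pm$ is holomorphic on $\mathbb{C}^*$. Thus each basis element $E_{[\pm]}^{(M)}$ is already a product of a power function and (the pullback to $\covered{C}^*$ of) a function holomorphic on $\mathbb{C}^*$. Finally, given an arbitrary solution $\somE\in\mathbf\Omega$, I would expand it in this basis, $\somE=c_+E_{[+]}^{(M)}+c_-E_{[-]}^{(M)}$ with constants $c_\pm$, and substitute the displayed formula to obtain
\[
\somE(\coveredz)=c_+\,\coveredz^{-\gamma_+}G_+(\iota\,\coveredz)+c_-\,\coveredz^{-\gamma_-}G_-(\iota\,\coveredz),
\]
which exhibits $\somE$ as a linear combination of two products of a power function and a function holomorphic on $\mathbb{C}^*$, as asserted (the constants $c_\pm$ may be absorbed into $G_\pm$ if one prefers).

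I do not expect any genuine obstacle here: the real analytic content — matching the one-sided boundary values of $\iota_*E_{[\pm]}^{(M)}$ across $\mathbb{R}_-$ after multiplication by $z^{\gamma_\pm}$, and then invoking existence and uniqueness of analytic continuation to conclude holomorphy on all of $\mathbb{C}^*$ — has already been carried out in the proof of the theorem. The only points that need a little care are bookkeeping ones: verifying that the hypotheses of the theorem are in force for the given equation, assigning consistent branches to the square roots in \eqref{eq::1130} and \eqref{eq::1150}, and using that $\dim\mathbf\Omega=2$ so that a linearly independent pair is automatically a basis. None of these is hard.
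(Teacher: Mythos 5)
Your proposal is correct and is essentially the deduction the paper intends: the theorem exhibits each of the basis functions $E_{[\pm]}^{(M)}$ (a basis precisely because \eqref{eq::1120} forces the monodromy eigenvalues $\Lambda_\pm$ to be distinct) as a power function times the pullback of a function holomorphic on $\mathbb{C}^*$, and expanding an arbitrary solution in that basis gives the statement. No genuinely different route is taken.
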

\noindent
In accordance with Laurent theorem, the functions $ G_{\pm}(z) $
admit the expansions in Laurent series with center at zero which converge
everywhere except at zero. In terms of properties of solutions to
Eq.{~}\eqref{eq::0010} this means the following.
\begin{corollary}
Under conditions assumed by the above theorem,
there exists $\gamma\not=0$ such that
there exist the two two-sides sequences $g_{+}^{(k)}$ and $g_{-}^{(k)}$, $k\in\mathbb{Z}$,
obeying the equations
\begin{equation}
\label{eq::1190}
0=
-\mu(k \pm\gamma+l)g_{\pm}^{(k-1)}
+\big( (k \pm\gamma) (k \pm\gamma+l) +\lambda \big)g_{\pm}^{(k)}
+\mu(k \pm\gamma+1)g_{\pm}^{(k+1)}
\end{equation}
such that the two generalized power series
\begin{equation}
\label{eq::1200}
\somE_{[\pm]}(z)=\sum_{k=-\infty}^\infty g_{\pm}^{(k)} z^{k\pm\gamma}
\end{equation}
converge everywhere except at zero
and satisfy Eq.{~}\eqref{eq::0010}. \\
These are permuted by
the operator $\opmC$ (up to some numerical factors).
         \end{corollary}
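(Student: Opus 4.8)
The plan is to read the statement off the preceding theorem (and the Laurent remark immediately following it) together with a single substitution into Eq.~\eqref{eq::0010}. The first task is to exhibit $\gamma$ and to check $\gamma\not=0$. I would set $\gamma:=\gamma_-$, so that $\gamma_+=-\gamma$ by the relation $\gamma_++\gamma_-=0$ already recorded. That $\gamma\not=0$ then amounts to $\Lambda_-\not=1$: by Eq.~\eqref{eq::1160} one has $\Lambda_+\Lambda_-=1$, while the genericity assumption \eqref{eq::1120} is precisely the condition $\Lambda_+\not=\Lambda_-$, so neither $\Lambda_+$ nor $\Lambda_-$ can equal $1$, whence $\gamma_\pm=\Imi(2\pi)^{-1}{\mathrm{Log}}\,\Lambda_\pm\not=0$.

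Next I would invoke the theorem: $G_+$ and $G_-$ are holomorphic on $\mathbb{C}^*$, so by the Laurent remark stated just above they expand as $G_\pm(z)=\sum_{k\in\mathbb{Z}}g_{\pm}^{(k)}z^{k}$, the series converging absolutely and locally uniformly on the whole punctured plane, i.e.\ everywhere except at zero. Substituting into \eqref{eq::1180} and using $\gamma_+=-\gamma$, $\gamma_-=\gamma$,
\[
E_{[+]}^{(M)}(\coveredz)=\coveredz^{\gamma}G_+(\iota\coveredz)=\sum_{k\in\mathbb{Z}}g_{+}^{(k)}\coveredz^{k+\gamma},
\qquad
E_{[-]}^{(M)}(\coveredz)=\coveredz^{-\gamma}G_-(\iota\coveredz)=\sum_{k\in\mathbb{Z}}g_{-}^{(k)}\coveredz^{k-\gamma},
\]
so the two right-hand sides are exactly the series $\somE_{[\pm]}$ of \eqref{eq::1200}, converging off zero and non-trivial (the $E_{[\pm]}^{(M)}$ being linear independent under \eqref{eq::1120}). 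Since $E_{[\pm]}^{(M)}$ is a linear combination of the solutions $\EEpm{\pm}$ of Eq.~\eqref{eq::0010}, it — hence $\somE_{[\pm]}$ — solves \eqref{eq::0010}.

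It remains to derive the recurrence \eqref{eq::1190} and the $\opmC$-permutation. For \eqref{eq::1190} I would plug $\somE_{[\pm]}(z)=\sum_k g_{\pm}^{(k)}z^{k\pm\gamma}$ into \eqref{eq::0010}; term-wise differentiation is legitimate because the series together with its formal first and second derived series converge locally uniformly on $\mathbb{C}^*$. Expanding $z^2\somE''$, $(l+1)z\,\somE'$, $\mu\,\somE'$, $-\mu z^2\somE'$, $-\mu(l+1)z\,\somE$, $\lambda\,\somE$ as power series and shifting indices so that every summand carries $z^{k\pm\gamma}$, the vanishing of its total coefficient reads
\[
\big[(k\pm\gamma)(k\pm\gamma-1)+(l+1)(k\pm\gamma)+\lambda\big]g_{\pm}^{(k)}
-\mu\big[(k-1\pm\gamma)+l+1\big]g_{\pm}^{(k-1)}
+\mu(k+1\pm\gamma)g_{\pm}^{(k+1)}=0 ,
\]
and the elementary reductions $(k\pm\gamma)(k\pm\gamma-1)+(l+1)(k\pm\gamma)=(k\pm\gamma)(k\pm\gamma+l)$ and $(k-1\pm\gamma)+l+1=k\pm\gamma+l$ turn this into \eqref{eq::1190}. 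For the last assertion I would use \eqref{eq::1130} to write $E_{[\pm]}^{(M)}=a\,\EEpm{+}\pm b\,\EEpm{-}$ with $a,b$ the square-root coefficients appearing there, and then apply $\opmC\EEpm{+}=(2\omega)(\lambda+\mu^2)^{1/2}\EEpm{+}$, $\opmC\EEpm{-}=-(2\omega)(\lambda+\mu^2)^{1/2}\EEpm{-}$ (Eq.~\eqref{eq::0460}; equivalently \eqref{eq::0540} under \eqref{eq::0530}); by linearity $\opmC E_{[\pm]}^{(M)}=(2\omega)(\lambda+\mu^2)^{1/2}E_{[\mp]}^{(M)}$, i.e.\ $\opmC$ interchanges $\somE_{[+]}$ and $\somE_{[-]}$ up to the scalar $(2\omega)(\lambda+\mu^2)^{1/2}$ (which is $1$ under \eqref{eq::0530}).

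I do not expect a genuine obstacle: the analytic content was already spent in the preceding theorem, and this corollary is essentially its coordinate restatement. The only points meriting care are the legitimacy of the term-wise substitution into \eqref{eq::0010} (local uniform convergence of a Laurent series and of its differentiated series on compacta of $\mathbb{C}^*$) and keeping straight the sign conventions relating $\gamma$ to $\gamma_\pm$; the remainder is index-shift bookkeeping.
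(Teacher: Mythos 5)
Your proposal is correct and follows essentially the same route the paper intends: the corollary is read off the preceding theorem by expanding $G_{\pm}$ in their Laurent series (the paper's own remark), multiplying by $\coveredz^{-\gamma_\pm}$, substituting term by term into Eq.~\eqref{eq::0010} to get the recurrence \eqref{eq::1190}, and using \eqref{eq::1130} with \eqref{eq::0540} for the $\opmC$-permutation; your check that \eqref{eq::1120} together with $\Lambda_+\Lambda_-=1$ forces $\Lambda_\pm\not=1$, hence $\gamma\not=0$, is exactly the needed (and correct) justification.
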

The use of power series with center at zero as a ``template'' for solutions to
DCHE including Eq.{~}\eqref{eq::0010} is the wellknown method, see, for example,
Eq.~(2.4.54), the next one, and Eq.~(2.4.55) in Ref.~\cite{SW}.
Its curious feature is that, solving \Eqs{~}\eqref{eq::1190},
the series \eqref{eq::1200} can be constructed for {\em arbitrary\/} $\gamma$.
Plugging it further into Eq.{~}\eqref{eq::0010} and carrying out computations
term by term, one finds that the latter is satisfied. The point however is that such a solution
is actually {\em formal}, i.e.\
the series
\eqref{eq::1200} diverges for any non-zero $z$.
The only exception (up to some formally distinct but equivalent ones)
corresponding to an actual solutions to Eq.{~}\eqref{eq::0010}
is the choice specified by the above theorem. The reasoning leading to it
provide us with a simple {\em proof on existence of $\gamma$
making some solution to \Eqs{~}\eqref{eq::1190} convergent}.
Another proof based on the Hadamard-Perron theorem had been
proposed in Ref.\ \cite{BG}.

\end{document}